\documentclass[journal]{IEEEtran}
\IEEEoverridecommandlockouts
\overrideIEEEmargins

\usepackage{amsfonts,amsmath,amssymb} 


\newcommand{\dif}{\,\mathrm{d}}		  
\DeclareMathOperator{\diag}{diag}         

\usepackage{psfrag,color}
\usepackage{enumerate,cite,latexsym,graphicx}
\newtheorem{theorem}{Theorem}
\newtheorem{lemma}{Lemma}
\newtheorem{corollary}{Corollary}
\newtheorem{definition}{Definition}

\newtheorem{remark}{Remark}

\interdisplaylinepenalty=1000

\title{\LARGE \bf 
Quantum Noises, Physical Realizability and Coherent Quantum Feedback Control
}

\author{Shanon L.~Vuglar, Ian R.~Petersen%
%
\thanks{Shanon L. Vuglar is with the School of Engineering and Information 
Technology, 
        University of New South Wales at the Australian Defence Force Academy, Canberra ACT 2600, Australia.
         {\tt\small shanonvuglar@vuglar.com} }%
\thanks{Ian R. Petersen is with the School of Information Technology and Electrical Engineering, 
        University of New South Wales at the Australian Defence Force Academy, Canberra ACT 2600, Australia.
         {\tt\small i.r.petersen@gmail.com} } %
	 \thanks{The authors gratefully acknowledge support by  the Australian Research Council and the Air Force Office of Scientific Research (Grant Nos. FA2386-09-1-4089 and FA2386-12-1-4075).}}%

\begin{document}

\maketitle
\thispagestyle{empty}
\pagestyle{empty}

\begin{abstract}
Physical Realizability addresses the question of whether it is possible to 
implement a given linear time invariant (LTI) system as a quantum system. 
A given synthesized quantum controller described by a set of stochastic 
differential equations does not necessarily correspond to 
a physically meaningful quantum 
system. However, if additional quantum noises are permitted in the 
implementation, it is always possible to implement an arbitrary LTI system as a 
quantum system. In this paper, we give an expression for the number of 
introduced noise channels required to implement a given LTI system as a quantum system. 
We then consider the special case where only the transfer function 
to be implemented is of interest. 
We give results showing when it is possible to implement a 
transfer function as a quantum system by introducing the same number of 
quantum noises as there are system outputs.
Finally, we demonstrate the utility of these results by providing an algorithm
for obtaining a suboptimal solution to a coherent quantum LQG control problem.
\end{abstract}

\section{Introduction} \label{sec:intro}
For systems where it is necessary to 
consider quantum effects, 
the laws of quantum mechanics introduce new considerations not present in 
classical controller synthesis problems. 
The presence of quantum noises~\cite{GZ00} introduce fundamental limits on controller performance. 
Furthermore, the requirement for unitary evolution, the non-commuting nature of quantum observables, 
and the requirement for commutation relations to be preserved as systems evolve (see for example 
\cite{AFP09})
lead to the notion of \emph{physical realizability} 
 \cite{JNP1,ShP12}.  
This is the property that a given 
system model represents the dynamics of a physically meaningful quantum system.
Controller synthesis and optimization 
problems that are well understood in the classical regime 
can become difficult when restricting their solutions to physically realizable quantum 
controllers. 
New and tractable methods are required for these quantum controller synthesis problems.

It is useful to draw a distinction between measurement based quantum feedback control 
and coherent quantum feedback control.
In measurement based quantum feedback control, measurements of observables of a quantum system are used to 
apply feedback via a classical controller. While the closed loop system is modeled and analyzed 
in a quantum setting, the results of the measurements are classical 
signals and the controller can be implemented using analog or digital electronics.
This paper addresses the alternative coherent quantum feedback approach illustrated in 
Figure \ref{fig:coherent} in which quantum systems are 
interconnected directly, avoiding measurement; e.g., see \cite{NJP1}.

\begin{figure}[h]
	\centering
	\includegraphics[trim = 0mm 0mm 0mm -10mm, scale=0.4]{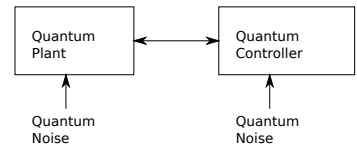}
	\caption{A Coherent Quantum Control Scheme \label{fig:coherent} 
}
\end{figure}

Both forms of quantum feedback control are relevant to a diverse range of applications which 
take advantage of quantum effects. These applications include  
quantum computing, quantum communications, quantum cryptography and 
precision metrology such as gravity wave detection.
Measurement based quantum feedback 
control is well understood (see for example~\cite{WM10}) and has been used successfully to 
manipulate quantum effects. For example, in \cite{VMS12} a qubit was maintained in an oscillating superposition 
state. This important result is relevant to the field of quantum computing. 

The majority of experimental results to date 
have focused on measurement based feedback control. 
%
%
%
%
Coherent quantum feedback control presents additional challenges in that the controller 
must be physically realizable. However, coherent quantum feedback control may offer several 
advantages over its measurement based counterpart. 
Firstly, coherent feedback avoids the collapse of 
the quantum state and
the loss of quantum information associated with the use of quantum measurement. 
This is particularly relevant to quantum 
computing where quantum states need to be maintained and manipulated. 
Secondly, implementing 
coherent controllers may introduce fewer quantum noise channels than the measurement process 
and this in turn may lead to better control system performance.
Finally, it may be that there are technical or experimental benefits in 
implementing a controller as a quantum system.  
For example, the use of a coherent controller may result in advantages in terms of the speed 
of control. Also, the experimental setup may make measurement impractical.

Coherent quantum feedback control 
is generating increasing interest within the research community 
(\!\cite{JNP1,MaP4,NJP1,MAB08}) and central to this area is the notion of physical realizability.
In the classical setting, we regard controllers as always being 
possible to implement. 
In the quantum setting, a 
given synthesized quantum controller described by a set of stochastic 
differential equations cannot always be implemented by a physically meaningful quantum 
system. Several recent papers have addressed this issue of physical 
realizability \cite{JNP1,ShP12,MaP3,VuP12b}, giving conditions for 
when a given system is physically realizable. 
Other papers \cite{NJD09,NUR10,NUR10A,Pet11} have given algorithms 
for experimentally implementing several classes of physically realizable quantum systems.

An importance difference between classical and quantum controller synthesis is that in the case 
of coherent quantum feedback control, implementing a controller as a quantum system 
may require the introduction of quantum vacuum noises. 
To see how this might arise, consider the following  
example from quantum optics.

\begin{figure}[h]
	\centering
	\includegraphics[trim = 0mm 0mm 0mm -10mm, scale=0.4]{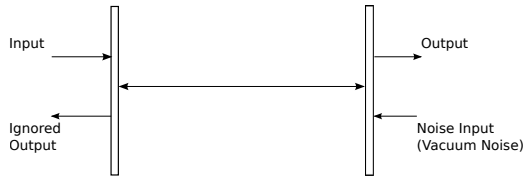}
	\caption{\label{fig} An Optical Cavity}
\end{figure}

Suppose that as part of the controller implementation process, 
the design calls for a laser beam to be 
passed through an optical cavity as shown in Figure \ref{fig}. 
Here a naive approach would 
be to consider this device as having a single input and single output. 
However, the laws of quantum mechanics imply that 
there is a second input to the cavity. Indeed, the mirror on the right, which 
produces the output, also causes the cavity to be coupled to a vacuum 
noise input; e.g. \cite{GZ00}. To obtain correct results when 
modeling such a system, it is essential to take this additional vacuum 
noise source into account.

Utilizing well established controller synthesis methods (such as $H^\infty$ controller synthesis) 
and modifying the 
classical solutions by incorporating additional quantum noises to obtain 
physically realizable quantum systems provides a tractable approach to coherent quantum 
controller design; e.g., see \cite{JNP1}. This approach requires a method for 
determining 
how many additional quantum noises are necessary for physical realizability, 
and for constructing the resulting quantum systems.

In \cite{JNP1}, the authors demonstrated that it is 
always possible to implement an arbitrary, strictly proper, linear time invariant~(LTI) system as a quantum system 
by introducing a sufficient number of quantum vacuum noise channels. 
It is straightforward to obtain upper and lower bounds on the number of 
introduced quantum noises that are necessary to obtain physical realizability. Since these noises 
place limits on the achievable controller performance, it is desirable to minimize the number 
of these introduced noises. This paper extends the result in~\cite{JNP1} 
to determine the number of introduced quantum noises that are necessary 
to implement a given, strictly proper, LTI system. 
Also, our result extends the construction method in \cite{JNP1} to give a
construction that only introduces as many  
quantum vacuum noises as are necessary to make that system physically realizable. 

We also consider the 
special case in which we are only interested in physically realizing a transfer function, as opposed 
to a specific state space realization.
Note that the number of introduced quantum noises necessary to 
physically realize a strictly proper LTI system 
must be at least as many as the output dimension.
We provide a condition, 
under which a strictly proper transfer function can be physically realized 
with the number of introduced quantum noises being equal to the output dimension. 
This condition is given in terms of a non-standard algebraic Riccati equation. 
We then provide conditions for the existence of a suitable solution 
to this Riccati equation. This leads to a numerical solution to the question of 
whether a particular strictly proper transfer function is physically realizable with the number of introduced 
quantum noises being equal to the output dimension. 

Preliminary conference versions of the results of this paper have appeared in 
\cite{VuP11a,VuP12a,VuP12c}. Here, we provide 
detailed proofs not included in those conference papers. Furthermore, we demonstrate the utility of our 
main results by providing an algorithm to 
obtain a suboptimal solution to a coherent quantum linear quadratic Gaussian (LQG) problem. 
This algorithm, and the example demonstrating its application, 
did not appear in the conference papers.

The remainder of the paper proceeds as follows. In Section~\ref{sec:model}, 
we describe the quantum systems considered throughout this paper. In Section
\ref {sec:realizability}, we recall the definition of physically realizable 
systems and outline relevant previous results on this topic. 
In Section \ref{sec:result}, we present our 
main results. We first consider the problem of implementing a 
particular state space model as a quantum system and the number of  
introduced quantum noises necessary to do so. We then consider the special 
case where a transfer function is to be physically realized. 
We give results regarding when such a transfer function 
is physically realizable with the number of introduced noises being equal to the  
dimension of the system output. 
In Section \ref{sec:lqg}, we demonstrate the utility of our results by presenting 
an algorithm for finding a quantum controller which is a suboptimal solution to 
a coherent quantum LQG problem.
An example demonstrating our algorithm followed by our conclusion are then given 
in Sections \ref{sec:ex} and \ref{sec:conc} respectively.

\section{Quantum Systems} \label{sec:model}
\subsection{General Quantum System Model} \label{subsec:generalmodel}
Open quantum harmonic oscillators represent an 
important class of quantum systems. Such systems can be described by 
quantum stochastic differential equations (QSDEs) of the following form (see ~\cite{JNP1}):
\begin{eqnarray}
	\dif x(t) &=& A x(t) \dif t + B \dif w(t) 
		; \nonumber \\
	\dif y(t) &=& C x(t) \dif t + D \dif w(t). 
	\label{eqn:model}
\end{eqnarray}

Here, $x(t) = \begin{bmatrix}x_1(t) & \cdots & x_n(t)\end{bmatrix}^T$
is a column vector of $n$ self-adjoint system variables which are operators on an underlying 
Hilbert space. Being quantum in nature, these 
variables do not commute in general. 
The commutation relations for these variables are described by 
a real skew-symmetric matrix~$\Theta$:
\begin{equation*}
		\begin{bmatrix}x_i(t),x_j(t)\end{bmatrix} 
		= x_i(t)x_j(t) - x_j(t)x_i(t)
		= 2i\Theta_{ij}.
\end{equation*}

Similarly, 
$\dif w(t)$ is a column vector of $n_w$ self-adjoint, non-commutative operators 
representing the input to the system and  
$\dif y(t)$ is a column vector of $n_y$ self-adjoint, non-commutative operators 
representing the output of the system. Their commutation relations are as 
follows:
$$\begin{bmatrix}\dif w_i(t),\dif w_j(t)\end{bmatrix} = 2i\Theta_{w,ij} \dif t;$$
$$\begin{bmatrix}\dif y_i(t),\dif y_j(t)\end{bmatrix} = 2i\Theta_{y,ij} \dif t;$$
where $\Theta_w$ and $\Theta_y$ are real skew symmetric matrices.

The input signals $\dif w(t)$ are assumed to admit the decomposition
$$\dif w(t) = \beta_w(t) \dif t + \dif \tilde{w}(t)$$ 
where the self-adjoint, adapted process $\beta_w(t)$ is  
the signal part of $\dif w(t)$ and $\dif \tilde{w}(t)$
is the noise part of $\dif w(t)$. Here, $\beta_w(t)$ is 
assumed to commute with $x(t)$. The vector $\dif \tilde{w}(t)$ is a quantum Wiener process 
with Ito products
$$\dif \tilde{w}(t) \dif \tilde{w}^T(t) = F_{\tilde{w}} \dif t$$
where $F_{\tilde{w}}$ is a non-negative Hermitian matrix. Let 
$F_{\tilde{w}} = S_{\tilde{w}} + T_{\tilde{w}}$, where $S_{\tilde{w}}$ is real 
and $T_{\tilde{w}}$ is imaginary. Then 
$S_{\tilde{w}}$ describes the intensity of the quantum Wiener process and is the 
quantum analog of the intensity matrix for a classical Wiener process. 
The commutation relations for $\dif \tilde{w}$ are determined by $T_{\tilde{w}}$:
$$\left[ \dif \tilde{w}(t), \dif \tilde{w}^T(t) \right] = 2 T_{\tilde{w}} \dif t.$$
Since $\beta_w(t)$ is an adapted process, $\beta_w(t)$ commutes with $\dif \tilde{w}(t)$ 
for all $ t \ge 0$. Also, $\dif \tilde{w}(t)$ commutes with $x(t)$. 

Finally, $n,n_w$ and $n_y$ are even (this is because in the quantum harmonic oscillator, the system 
variables always occur as conjugate pairs, see \cite{WM10}) 
and $A,B,C$ and $D$ are appropriately 
dimensioned real matrices describing the dynamics of the system. For further
details regarding these models, see~\cite{JNP1}. 

\begin{remark}
	While it is always possible to describe a collection of quantum harmonic oscillators 
	by QSDEs of the form~(\ref{eqn:model}), not all QSDEs 
	of this form 
	correspond to a collection of quantum harmonic oscillators. The property that the 
	QSDEs~(\ref{eqn:model}) correspond to a collection of quantum harmonic oscillators is called 
	physical realizability and is addressed in greater detail in 
	Section~\ref{sec:realizability}.
\end{remark}

We will now further restrict our attention within the class of quantum systems described 
above.

\subsection{A Class of Quantum System Models} \label{subsec:specialmodel}
This paper addresses the problem of implementing an arbitrary, strictly proper, 
LTI system as a quantum system 
(for example when implementing a coherent controller) by introducing 
vacuum noise sources. The resulting quantum 
systems are described by the following QSDEs 
which are a special case of (\ref{eqn:model}):
\begin{eqnarray}
	\dif x(t) &=& A x(t) \dif t + B_u \dif u(t) \nonumber \\
	&& {} + B_{v_1} \dif v_1(t) +  B_{v_2} \dif v_2(t)  
		; \nonumber \\
	\dif y(t) &=& C x(t) \dif t + \dif v_1(t).
	\label{eqn:model2}
\end{eqnarray}
Here, $\dif u(t)$ (a column vector with $n_u$ components) 
represents the inputs to the system and, like $\dif w(t)$ in (\ref{eqn:model}), 
is assumed to admit the 
decomposition $\dif u(t) = \beta_u(t) \dif t + \dif \tilde{u}(t).$
Also, $\dif v_1(t)$ and $\dif v_2(t)$ (column vectors with $n_{v_1}$ and $n_{v_2}$ components respectively) 
are quantum Wiener processes 
corresponding to the introduced vacuum noise inputs. For convenience, the vacuum noises are 
partitioned into two vectors $\dif v_1(t)$ and $\dif v_2(t)$ such that $n_{v_1} = n_u$. 
Then, $n_v = n_{v_1} + n_{v_2}$ is the total number of introduced vacuum noise inputs. 
Subsequently, we will refer to $\dif v_1$ as the \emph{direct feedthrough quantum noises} and to 
$\dif v_2$ as the \emph{additional quantum noises}. 
Also, 
$F_{\tilde{u}}$, $S_{\tilde{u}}$, $T_{\tilde{u}}$, 
$F_{v_1}$, $S_{v_1}$, $T_{v_1}$, $F_{v_2}$, $S_{v_2}$, and $T_{v_2}$ are defined for $\dif \tilde{u}(t)$, 
$\dif v_1(t)$ and $\dif v_2(t)$ respectively  
as $F_{\tilde{w}}$, $S_{\tilde{w}}$ and $T_{\tilde{w}}$ were for $\dif \tilde{w}(t)$ 
in~(\ref{eqn:model}). 
Furthermore, we assume that $F_{v1}$ and 
$F_{v2}$ are appropriately dimensioned block diagonal matrices with each diagonal 
block equal to 
$ \left[ \begin{smallmatrix}1 & i \\ -i & 1
	\end{smallmatrix} \right]$. 
This assumption corresponds to the fact that $\dif v_1$ and $\dif v_2$ 
represent vacuum noises \cite{JNP1}.
The remaining symbols have the same meanings as in (\ref{eqn:model}). We restrict our 
attention to the case where $n_y = n_u$.

\section{Physical Realizability} \label{sec:realizability}
\subsection{Definitions} \label{subsec:def}
As in \cite{JNP1,MaP3,MaP4,VuP11a,VuP12a,VuP12c}, 
the concept of physical realizability
means that the system dynamics described by the QSDEs~(\ref{eqn:model}) 
correspond to those of a collection of open quantum harmonic oscillators. 
Here, we slightly modify the definition of physically realizable given in 
\cite [Definition 3.1]{JNP1}. In \cite{JNP1} 
both fully quantum systems and hybrid systems with quantum and classical 
degrees of freedom are considered. However, 
we restrict our definition of physically realizable 
systems to those that are fully quantum.

\begin{definition}
	The system variables $x$ are said to satisfy the canonical commutation relations 
	if $$\begin{bmatrix}x_i(t),x_j(t)\end{bmatrix} = 2i\Theta_{ij}$$
	where $\Theta$ is of the form 
	\begin{equation}
	\Theta = \begin{bmatrix} 
	J & 0 & \cdots & 0 \\
	0 & J & \cdots & 0 \\
	\vdots  & \vdots & \ddots & \vdots \\
	0 & 0 & \cdots & J
	\end{bmatrix};
	\qquad 
	J = \begin{bmatrix}0 & 1 \\ -1 & 0
	\end{bmatrix}.   
		\label{eqn:ctheta}
	\end{equation}	
	This corresponds to the case where $x$ consists of pairs of position and 
	momentum operators: $\begin{bmatrix} q_1(t) & p_1(t) & q_2(t) & p_2(t) & \dots \end{bmatrix}^T$.
\end{definition}

\begin{definition} \label{def:pr}
	The system described by (\ref{eqn:model}) is \emph{physically realizable} if 
	$\Theta$ is of the form (\ref{eqn:ctheta}) and
	there exists 
	a quadratic Hamiltonian operator $\mathcal{H} = \frac{1}{2} x(0)^TRx(0)$,
	where $R$ is a real, symmetric, $n \times n$ matrix, and  
	a coupling operator vector $\mathcal{L} = \Lambda x(0)$, 
	where $\Lambda$ is a complex-valued $\frac{1}{2} n_w \times n$ 
	coupling matrix  
	such that the matrices $A$, $B$, $C$ and $D$ are given by:
	\begin{subequations}
	\begin{align}
		A &= 2 \Theta \left(R + \mathfrak{Im}\left(
			\Lambda^{\dagger}\Lambda \right) \right);
			\label{eqn:a} 
			\\
		B &= 2i \Theta \begin{bmatrix}
			-\Lambda^{\dagger} & \Lambda^T \end{bmatrix}\Gamma; 
			\label{eqn:b} 
			\\
		C &= P^T \begin{bmatrix}
			\Sigma_{n_y} & 0 \\ 0 & \Sigma_{n_y} \end{bmatrix}
			\begin{bmatrix} \Lambda + \Lambda^\# \\
				-i\Lambda + i\Lambda^\# \end{bmatrix};
			\label{eqn:c} 
			\\
		D &= \begin{bmatrix} I_{n_y \times n_y} &
			0_{n_y \times \left(n_w - n_y\right)} 
			\end{bmatrix}.
			\label{eqn:d} 
	\end{align}
	\end{subequations}
	Here: $\Gamma_{n_w \times n_w} = P \diag (M)$; 
		$M = \frac{1}{2}\left[ \begin{smallmatrix}1 & i \\ 1 & -i 
			\end{smallmatrix} \right]$;  
		$\Sigma_{n_y} = \begin{bmatrix}
			I_{\frac{1}{2}n_y \times \frac{1}{2}n_y} &
			0_{\frac{1}{2}n_y \times \frac{1}{2}\left(
				n_w - n_y \right) } \end{bmatrix}$;  
	$P$ is the appropriately dimensioned square permutation 
	matrix such that 
	$ P \begin{bmatrix}a_1 & a_2 & \cdots & a_{2m} \end{bmatrix} $ \linebreak
$	=\begin{bmatrix}a_1 & a_3 & \cdots & a_{2m-1} 
	a_2 & a_4 & \cdots & a_{2m} \end{bmatrix}$
	and 
	$\diag (M)$ is an appropriately dimensioned square block diagonal 
	matrix with each diagonal block equal to the matrix $M$. (Note that the  
	dimensions of $P$ and $\diag (M)$ can always be determined from the
	context in which they appear.) $\mathfrak{Im}\left(.\right)$ 
	denotes the imaginary part of a matrix and ${}^\dagger$ denotes the 
	complex conjugate transpose of a matrix.
\end{definition}

\begin{remark}
	This definition amounts to saying that a system~(\ref{eqn:model}) is physically realizable 
	if and only if it corresponds to a collection of open quantum harmonic oscillators \cite{JNP1}.
\end{remark}

We now apply this definition to the class of quantum systems (\ref{eqn:model2}). 
	The system (\ref{eqn:model2}) is physically realizable if 
	there exists 
	a real, symmetric, $n \times n$ matrix $R$, and  
	a complex-valued $\frac{1}{2}\left( n_{v_1} + n_{v_2} + n_u\right)	\times n$ 
	 matrix $\Lambda$ such 
	that the matrices $A$, $B_u$, $B_{v_1}$, $B_{v_2}$ and $C$ are given by: 
	\begin{subequations}
	\begin{align}
		A &= 2 \Theta \left(R + \mathfrak{Im}\left(
			\Lambda^{\dagger}\Lambda \right) \right);
			\label{eqn:a2} \\
		\begin{bmatrix}
			B_{v_1} & B_{v_2} & B_u 
		\end{bmatrix}			
			&= 2i \Theta \begin{bmatrix}
			-\Lambda^{\dagger} & \Lambda^T \end{bmatrix}\Gamma; 
			\label{eqn:b2} \\
		C &= P^T \begin{bmatrix}
			\Sigma_{n_y} & 0 \\ 0 & \Sigma_{n_y} \end{bmatrix}
			\begin{bmatrix} \Lambda + \Lambda^\# \\
				-i\Lambda + i\Lambda^\# \end{bmatrix},
			\label{eqn:c2}
	\end{align}
	\end{subequations}
	where $\Theta$ is of the form (\ref{eqn:ctheta}). 
	Here, 
		$\Sigma_{n_y} = \begin{bmatrix}
			I_{\frac{1}{2}n_y \times \frac{1}{2}n_y} &
			0_{\frac{1}{2}n_y \times \frac{1}{2}\left(
			n_{v_1} + n_{v_2} + n_u	- n_y \right) } \end{bmatrix}$.

\begin{theorem}
	(see \cite [Theorem 3.4]{JNP1})
	The system (\ref{eqn:model}) is physically realizable if and only
	if:
	\begin{eqnarray*}
		0 &=& iA\Theta + i\Theta A^T + BT_{\tilde{w}}B^T; \\
		B \begin{bmatrix} I \\ 0 \end{bmatrix} 
		&=& \Theta C^T \mbox{diag}(J);
	\end{eqnarray*}
	where $\Theta$ is defined as in (\ref{eqn:ctheta}) and $D$ satisfies (\ref{eqn:d}).
\end{theorem}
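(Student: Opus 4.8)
The plan is to argue directly from Definition~\ref{def:pr}: the system~(\ref{eqn:model}) is physically realizable exactly when there exist a real symmetric $R$ and a complex $\tfrac12 n_w \times n$ coupling matrix $\Lambda$ making $A,B,C,D$ take the form (\ref{eqn:a})--(\ref{eqn:d}). I would prove the two implications separately, keeping the prescribed form (\ref{eqn:d}) of $D$ as a hypothesis shared by both. Throughout, the two structural facts $\Theta^T=-\Theta$ and $\Theta^2=-I$ (immediate from the $J$-block form (\ref{eqn:ctheta}), since $J^2=-I$), together with the Hermiticity of $\Lambda^\dagger\Lambda$ (so $\mathfrak{Re}(\Lambda^\dagger\Lambda)$ is symmetric and $\mathfrak{Im}(\Lambda^\dagger\Lambda)$ is skew) and the elementary identities satisfied by $M$, $\Gamma=P\diag(M)$ and the permutation $P$, are the main algebraic tools.

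For the ``only if'' direction I would substitute (\ref{eqn:a})--(\ref{eqn:c}) into the two claimed matrix identities and simplify. Using $R=R^T$ and the skew-symmetry of $\mathfrak{Im}(\Lambda^\dagger\Lambda)$ one gets $A\Theta+\Theta A^T=4\,\Theta\,\mathfrak{Im}(\Lambda^\dagger\Lambda)\,\Theta$; expressing $BT_{\tilde w}B^T$ through $\Lambda$ via (\ref{eqn:b}) and the way $\Gamma$ ``doubles up'' the noise data (so that $\Gamma T_{\tilde w}\Gamma^T$ collapses to a canonical block matrix) shows $BT_{\tilde w}B^T=-i\bigl(A\Theta+\Theta A^T\bigr)$, which is the first identity. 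For the second, expand $C$ from (\ref{eqn:c}), use $\Lambda+\Lambda^\#=2\mathfrak{Re}\Lambda$ and $-i\Lambda+i\Lambda^\#=2\mathfrak{Im}\Lambda$, and check termwise against $B\bigl[\begin{smallmatrix}I\\0\end{smallmatrix}\bigr]$ computed from (\ref{eqn:b}), noting that (\ref{eqn:d}) selects precisely the first $n_y$ noise channels. This direction is pure computation.

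For the ``if'' direction, assume $\Theta$ has the form (\ref{eqn:ctheta}), $D$ satisfies (\ref{eqn:d}), and the two identities hold; the task is to construct $R$ and $\Lambda$. Since $\Theta$ and $\Gamma$ are invertible, set $N\defeq\frac{1}{2i}\Theta^{-1}B\Gamma^{-1}=-\frac{1}{2i}\Theta B\Gamma^{-1}$. One first checks that, because $B$ and $\Theta$ are real while the columns of $\Gamma^{-1}=\diag(M^{-1})P^T$ occur in complex-conjugate pairs, $N$ automatically has the block form $N=\begin{bmatrix}-\Lambda^\dagger & \Lambda^T\end{bmatrix}$ for a uniquely determined complex $\tfrac12 n_w\times n$ matrix $\Lambda$; thus (\ref{eqn:b}) holds by construction. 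Then define $R\defeq-\tfrac12\Theta A-\mathfrak{Im}(\Lambda^\dagger\Lambda)$, the only choice consistent with (\ref{eqn:a}). Conjugating the first given identity $iA\Theta+i\Theta A^T+BT_{\tilde w}B^T=0$ by $\Theta$ and using $\Theta^2=-I$ together with the expression for $BT_{\tilde w}B^T$ in terms of $\Lambda$ turns it into $\Theta A+A^T\Theta+4\,\mathfrak{Im}(\Lambda^\dagger\Lambda)=0$, which is exactly the statement $R=R^T$. Finally, the second given identity $B\bigl[\begin{smallmatrix}I\\0\end{smallmatrix}\bigr]=\Theta C^T\diag(J)$, combined with the already-established (\ref{eqn:b}) and the form (\ref{eqn:d}) of $D$, forces the $C$ built from this $\Lambda$ via (\ref{eqn:c}) to agree with the given $C$; hence $R,\Lambda$ witness physical realizability.

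I expect the ``if'' direction to be the main obstacle, and within it two bookkeeping points: (i) verifying that $N=\frac{1}{2i}\Theta^{-1}B\Gamma^{-1}$ genuinely has the conjugate-symmetric block structure $\begin{bmatrix}-\Lambda^\dagger & \Lambda^T\end{bmatrix}$, so that a legitimate coupling matrix can be extracted at all — this is where the realness of $B$ and the conjugation symmetry built into $M$ and $P$ must be exploited; and (ii) tracking $P$ and the selection matrices $\Sigma_{n_y}$ through (\ref{eqn:c}) so that the reconstructed $C$ coincides with the given one, including that the introduced-noise columns contribute nothing. The constant factors and the sign with which $T_{\tilde w}$ (defined through $F_{\tilde w}=S_{\tilde w}+T_{\tilde w}$) enters also need to be matched against the Itô table, but these become routine once the structural picture is fixed.
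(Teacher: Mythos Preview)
The paper does not give its own proof of this theorem: it is quoted verbatim as \cite[Theorem~3.4]{JNP1} and used as a black box (only the corollary specialising it to the model~(\ref{eqn:model2}) and the subsequent Theorem~\ref{thm:main} are argued in the paper). So there is nothing in the present paper to compare your argument against beyond the citation.

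Your plan itself is the natural one and matches the structure of the original proof in \cite{JNP1}: the ``only if'' direction is a direct computation from (\ref{eqn:a})--(\ref{eqn:d}), and the ``if'' direction reconstructs $R$ and $\Lambda$ from $A,B,C$ via $\Gamma^{-1}$ and then checks that the first identity forces $R=R^T$. The two bookkeeping points you flag---that $\tfrac{1}{2i}\Theta^{-1}B\Gamma^{-1}$ really has the $[\,-\Lambda^\dagger\ \ \Lambda^T\,]$ conjugate structure because $B$ is real, and that the $\Sigma_{n_y}$/$P$ accounting in (\ref{eqn:c}) selects exactly the first $n_y$ channels consistent with (\ref{eqn:d})---are precisely where the work sits, and your description of how to discharge them is correct.
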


\begin{corollary}
	The system (\ref{eqn:model2}) is physically realizable if and only
	if:
	\begin{eqnarray*}
		0 &=& iA\Theta + i\Theta A^T \nonumber 
		+ B_{v_1}T_{v_1}B_{v_1}^T \\
		&& {} 
		+ B_{v_2}T_{v_2}B_{v_2}^T 
		+ B_uT_{\tilde{u}}B_u^T; \\
		B_{v_1} &=& \Theta C^T \mbox{diag}(J),
	\end{eqnarray*}
	where $\Theta$ is defined as in (\ref{eqn:ctheta}). 
\end{corollary}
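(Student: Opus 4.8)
\emph{Proof sketch.} The plan is to obtain this corollary as a direct specialization of the preceding Theorem (\cite[Theorem 3.4]{JNP1}), since (\ref{eqn:model2}) is just (\ref{eqn:model}) written with a specific block structure. First I would set up the correspondence: take the composite noise vector $\dif w = \begin{bmatrix}\dif v_1^T & \dif v_2^T & \dif u^T\end{bmatrix}^T$, so that $B = \begin{bmatrix} B_{v_1} & B_{v_2} & B_u\end{bmatrix}$, and reading the output equation $\dif y = Cx\,\dif t + \dif v_1$ off against $\dif y = Cx\,\dif t + D\,\dif w$ gives $D = \begin{bmatrix} I_{n_{v_1}} & 0 & 0\end{bmatrix}$. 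Since $n_{v_1} = n_u = n_y$ by assumption, this $D$ is exactly of the form (\ref{eqn:d}) required by the Theorem, so the Theorem applies to this system.

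Next I would evaluate the Ito matrix of the composite noise part $\dif\tilde w = \begin{bmatrix}\dif v_1^T & \dif v_2^T & \dif\tilde u^T\end{bmatrix}^T$. Because $\dif v_1$, $\dif v_2$ and $\dif\tilde u$ are three separate quantum Wiener processes --- the two introduced vacuum inputs and the noise part of the signal input --- they have vanishing cross Ito products, so $T_{\tilde w} = \mbox{diag}(T_{v_1}, T_{v_2}, T_{\tilde u})$ is block diagonal. Substituting this together with the block form of $B$ into the Theorem's first condition $0 = iA\Theta + i\Theta A^T + B T_{\tilde w} B^T$, the cross terms vanish and $B T_{\tilde w} B^T$ expands to $B_{v_1} T_{v_1} B_{v_1}^T + B_{v_2} T_{v_2} B_{v_2}^T + B_u T_{\tilde u} B_u^T$, which is precisely the first displayed equation of the corollary.

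For the second condition, I would note that $B\begin{bmatrix} I \\ 0\end{bmatrix}$ picks out the first $n_y = n_{v_1}$ columns of $B$, which under the partition above is exactly $B_{v_1}$; hence the Theorem's relation $B\begin{bmatrix} I \\ 0\end{bmatrix} = \Theta C^T\mbox{diag}(J)$ becomes $B_{v_1} = \Theta C^T \mbox{diag}(J)$. This establishes one direction; the converse is obtained by running the same identification in reverse, so that any $A, B_{v_1}, B_{v_2}, B_u, C$ satisfying the two displayed equations reassemble into a system (\ref{eqn:model}) meeting the Theorem's hypotheses and hence a physically realizable (\ref{eqn:model2}). The only nontrivial point, and the one I would state explicitly, is the mutual independence of $\dif v_1$, $\dif v_2$ and $\dif\tilde u$ that makes $T_{\tilde w}$ block diagonal; this is guaranteed by the modelling assumptions of Section~\ref{subsec:specialmodel}, and everything else is routine block-matrix bookkeeping.
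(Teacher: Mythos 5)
Your proposal is correct and follows exactly the route the paper intends: the corollary is stated without proof as a direct specialization of the preceding theorem, obtained by partitioning $\dif w = \begin{bmatrix}\dif v_1^T & \dif v_2^T & \dif u^T\end{bmatrix}^T$ so that $D = \begin{bmatrix} I & 0\end{bmatrix}$ holds automatically (using $n_{v_1}=n_u=n_y$) and $BT_{\tilde w}B^T$ splits into the three block terms. Your explicit remark that the block-diagonality of $T_{\tilde w}$ rests on the vanishing cross Ito products of the independent noise channels is the right point to flag, and nothing further is needed.
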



\subsection{Previous Results} \label{subsec:previous}
In \cite{JNP1}, it was demonstrated that by introducing a sufficient number of 
vacuum noises, an arbitrary LTI system could be made 
physically realizable. In particular, the following lemma 
relating to the physical realizability of a purely quantum controller was proved. 

\begin{lemma}
	(See \cite [Lemma 5.6]{JNP1}).
	Let $F_{\tilde{u}}$ be a block diagonal matrix with each diagonal block 
	equal to $I + iJ$, and let 
	$A,B$ and $C$ be matrices such that
	$A \in \mathbb{R}^{n\times n}, 
	B \in \mathbb{R}^{n\times n_u}, 
	C \in \mathbb{R}^{n_y\times n}.$
	Also, let $\Theta$ be defined as in (\ref{eqn:ctheta}). Then there exists an even 
	integer	$n_{v_2} \ge 0$ and matrices $B_{v_1} \in \mathbb{R}^{n\times n_{v_1}}$,
	$B_{v_2} \in \mathbb{R}^{n\times n_{v_2}}$,
	such that the system~(\ref{eqn:model2}) is physically realizable.
\end{lemma}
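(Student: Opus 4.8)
The plan is to verify the two conditions of the Corollary by explicitly constructing the matrices $B_{v_1}$, $B_{v_2}$ and the skew-symmetric matrix $\Theta$ (of the canonical form \eqref{eqn:ctheta}). First I would set $B_{v_1} = \Theta C^T \diag(J)$, which immediately satisfies the second condition of the Corollary by definition, and guarantees $n_{v_1} = n_y = n_u$. Then the remaining task is to choose $B_{v_2}$ and $n_{v_2}$ so that the first (dissipation) equation holds, namely
\begin{equation*}
  B_{v_2} T_{v_2} B_{v_2}^T = -\,iA\Theta - i\Theta A^T - B_{v_1} T_{v_1} B_{v_1}^T - B_u T_{\tilde u} B_u^T .
\end{equation*}
Here $B_u$ is prescribed (it equals $B$ from the lemma statement), and $T_{\tilde u}$, $T_{v_1}$, $T_{v_2}$ are the imaginary parts of the respective Ito matrices; with the stated vacuum-noise normalization $F = I + iJ$ block-diagonally, one has $T = J$ (block diagonal) in each case, so $B_{v_2} T_{v_2} B_{v_2}^T = B_{v_2} \diag(J) B_{v_2}^T$. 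Denote the right-hand side above by $X$; note $X$ is real and skew-symmetric (the first two terms are skew because $\Theta$ is skew and the whole expression is real, and the last two are of the form $B\diag(J)B^T$ which is skew). So the crux is: given a real skew-symmetric $X$, realize it as $B_{v_2}\diag(J)B_{v_2}^T$ for some real $B_{v_2}$ with an even number of columns.

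This reduces to a standard normal-form fact: any real skew-symmetric matrix $X$ can be written, via a real orthogonal (or merely full-rank real) congruence, in block form with $2\times 2$ blocks $\left[\begin{smallmatrix} 0 & \sigma_k \\ -\sigma_k & 0\end{smallmatrix}\right]$ and zeros; equivalently $X = \sum_k \sigma_k (e_{2k-1} e_{2k}^T - e_{2k} e_{2k-1}^T)$ after an orthogonal change of basis. Writing $X = T_X \diag(J) T_X^T$ for a suitable real $T_X$ (absorbing the $\sigma_k$ and the rotation), one can take $n_{v_2} = 2\lceil \tfrac{1}{2}\rank X\rceil$ — or more crudely $n_{v_2}=n$ — and $B_{v_2} = T_X$ (padded with zero columns if needed to make the count even and block-structured). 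I would present the argument via the real Schur / skew-symmetric canonical form, being careful that if $\rank X$ is odd one pads to the next even number so that $\diag(J)$ has the right size. This also certifies the required existence of $R$: once $A$, $B_{v_1}$, $B_{v_2}$, $B_u$, $C$ satisfy the two identities of the Corollary, the equivalence proven in \cite{JNP1} (Definition \ref{def:pr} $\Leftrightarrow$ Theorem/Corollary) supplies a compatible $R$ and $\Lambda$ automatically, so no separate construction of the Hamiltonian is needed.

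The main obstacle I anticipate is purely bookkeeping: confirming that $X$ as defined is genuinely real and skew-symmetric (the reality of $iA\Theta + i\Theta A^T$ is not obvious termwise — it follows because $\Theta$ is real skew and the sum $iA\Theta+i\Theta A^T$ equals $i(A\Theta + (A\Theta)^T)$ which is $i$ times a real symmetric matrix, hence imaginary, so one must instead check skew-symmetry of the combination after including the $B T B^T$ terms; I would organize this so that the identity being imposed is manifestly an equality between two real skew-symmetric matrices), and then tracking the even-dimension and block-diagonal constraints on $n_{v_2}$ through the congruence. None of this is deep, but it is where the proof can go wrong if the normalization conventions for $T_{v_2}$ versus $\diag(J)$ are mishandled.
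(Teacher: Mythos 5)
Your proposal is correct in substance, but it takes a genuinely different route from the paper. The paper does not prove this lemma directly (it is cited from \cite{JNP1}); the relevant construction appears in the sufficiency half of the proof of Theorem \ref{thm:main}, which works through the $(R,\Lambda)$ parametrization of Definition \ref{def:pr}: it builds $R=-\tfrac{1}{4}(\Theta A+A^T\Theta^T)$ and the blocks $\Lambda_{b0},\Lambda_{b2}$ explicitly, and reduces the problem to finding $\Lambda_{b1}$ with $\Lambda_{b1}^\dagger\Lambda_{b1}=\Xi_1+\tfrac{i}{4}\tilde S$ nonnegative definite, solved by diagonalizing the Hermitian matrix $\tfrac{i}{4}\tilde S$ and taking $\Xi_1=U^\dagger|D|U$. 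You instead go through the algebraic characterization (Corollary 1), which reduces everything to the real congruence problem $B_{v_2}\,\mathrm{diag}(J)\,B_{v_2}^T=X$ for a real skew-symmetric $X$, solved by the skew-symmetric canonical form. The two are algebraically equivalent (indeed $X=-\Theta\tilde S\Theta$, so your minimal-rank congruence even recovers the count $n_{v_2}=\rank\tilde S$ of Theorem \ref{thm:main}); your route is the more elementary one for bare existence, while the paper's route additionally produces the Hamiltonian and coupling operators explicitly. Three minor corrections to your bookkeeping: (i) since $T_{\tilde w}$ is defined as the \emph{imaginary} part of $F_{\tilde w}$, the vacuum normalization gives $T=i\,\mathrm{diag}(J)$ rather than $\mathrm{diag}(J)$, so the dissipation identity should first be divided through by $i$; (ii) your side remark that $A\Theta+\Theta A^T$ is ``$i$ times a real symmetric matrix'' miscomputes the transpose --- since $\Theta A^T=-(A\Theta)^T$, the combination $A\Theta+\Theta A^T=A\Theta-(A\Theta)^T$ is already real skew-symmetric, so the reality and skewness of $X$ are immediate term by term and no reorganization is needed; (iii) the odd-rank padding is never required, because a real skew-symmetric matrix always has even rank. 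None of these affects the validity of your argument.
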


\begin{remark} \label{rem:minimal}
It follows from \cite [Theorem 3.4]{JNP1}, that 
for a system described by a strictly proper 
transfer function 
$$G(s) = C \left( sI - A \right)^{-1} B_u,$$
the dimension of the system output $n_y$ is a 
lower bound on the total number of introduced vacuum noises $n_v$ that are necessary for 
the system to be physically realizable. That is, the direct feedthrough quantum 
noises $\dif v_1$ in the system~(\ref{eqn:model2}) are necessary, 
but may not be sufficient for physical realizability. 
We are also interested in the situation in which the presence of the noises  $\dif v_1$ is 
sufficient for physical realizability and the noises $\dif v_2$ are not needed. 
In this case, we say that the LTI system is physically realizable with 
no additional vacuum noises. Physically realizing a system with 
\emph{minimal additional noises} means to 
implement the system as a quantum system (\ref{eqn:model2}) by only introducing the number 
of additional vacuum noises $n_{v_2} \ge 0$, that are necessary 
for physical realizability. 
\end{remark}

\section{Main Result} \label{sec:result}
\subsection{General Case - Implementing a State Space representation} \label{subsec:gc}
In this section, we give a method to physically realize a strictly proper LTI system 
\begin{eqnarray*}
	\dif x &=&  A x \dif t + B_u \dif u \\ 
	\dif y &=&  C x \dif t
\end{eqnarray*}
with minimal additional quantum noises. 
The remainder of this 
section is structured as follows. We first give our algorithm. 
We then formally state our result as a theorem. The subsequent proof 
of the theorem justifies our algorithm.

The algorithm for obtaining a physically realizable system~(\ref{eqn:model2}) with 
minimal additional quantum noises proceeds as follows:

\begin{enumerate}
\item 
Construct the matrix
\begin{equation} \tilde{S} = \Theta B_u \Theta_{n_u} B_u^T \Theta - \Theta A - A^T \Theta
	- C^T \Theta_{n_y} C. \label{eqn:s} \end{equation}
Here $\Theta_{n_u}$ and $\Theta_{n_y}$ are 
commutation matrices of the form~(\ref{eqn:ctheta}) of dimensions $n_u \times n_u$ and $n_y \times n_y$ 
respectively. 
\item 
	Find the rank of the matrix $\tilde{S}$: $r = \mbox{rank} \left[ \tilde{S} \right]$.
	Now $n_{v_2} = r$. That is, $n_u$ direct feedthrough quantum noises, 
	and $r$ additional quantum noises, are necessary 
	for the existence of $B_{v_1}$, $B_{v_2}$ such that
	the system (\ref{eqn:model2}) is  
	physically realizable.
	This gives $n_v = n_u + r$.
\item
	Calculate $S = \frac{i}{4}\tilde{S}$.
\item
	Construct the singular value decomposition (SVD) for $S$: 
	$S = U^{\dagger}DU$. Here $D$ is diagonal and $U$ is unitary.
\item
	Construct $\Lambda_{b1} = \left( \left| D \right| + D \right)^{\frac{1}{2}} U$ 
	where $\left| D \right|$ is the diagonal matrix with entries equal to 
	the absolute values of the corresponding entries in $D$.
\item
	Construct $B_{v_1}$ and $B_{v_2}$ as follows: 
	\begin{eqnarray*}
		B_{v_1} &=& \Theta C^T \diag (J); \\
		B_{v_2} &=& 2i\Theta \begin{bmatrix}
			-\Lambda^{\dagger}_{b1} & \Lambda^T_{b1} \end{bmatrix}
			P \nonumber \diag (M).
	\end{eqnarray*}
\end{enumerate}
The system~(\ref{eqn:model2}) with $\left\{ A,B_u,C \right\}$ given and 
with $B_{v_1}$, $B_{v_2}$ so constructed is physically realizable. 
We now give a theorem which formally states that $n_v = n_u + r$ introduced noises are 
necessary and sufficient for physical realizability. 
The construction of $B_{v_1}$ and $B_{v_2}$ above follows from the 
proof of the theorem.

\begin{theorem}
	\label{thm:main}
	Consider a strictly proper LTI system defined by given matrices  
	$A,B_u$ and $C$. 
	There exist matrices $B_{v_1}$ and $B_{v_2}$ such that the corresponding 
	system (\ref{eqn:model2}) is physically 
	realizable and with 
	$n_{v_2}$ equal to $r$ where $r$ 
	is the rank of the matrix 
	$\left( \Theta B_u \Theta_{n_u} B_u^T \Theta - \Theta A - A^T \Theta
	- C^T \Theta_{n_y} C \right)$. 
	Conversely, suppose that there exist matrices $B_{v_1}$ and $B_{v_2}$ such that 
	the corresponding system (\ref{eqn:model2}) is physically 
	realizable. 
	Then $n_{v_2} \ge r$. 
\end{theorem}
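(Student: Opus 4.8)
The plan is to work entirely from the Corollary characterizing physical realizability of the system~(\ref{eqn:model2}): such a system is physically realizable if and only if $B_{v_1} = \Theta C^T \diag(J)$ and
\begin{equation*}
0 = iA\Theta + i\Theta A^T + B_{v_1}T_{v_1}B_{v_1}^T + B_{v_2}T_{v_2}B_{v_2}^T + B_uT_{\tilde u}B_u^T.
\end{equation*}
Since the $\dif v_1$, $\dif v_2$ and $\dif\tilde u$ are all vacuum noises, each of $T_{v_1},T_{v_2},T_{\tilde u}$ is block diagonal with blocks $iJ$ (up to sign conventions), so after multiplying through and rearranging, the equation becomes a statement that a certain fixed real symmetric matrix — precisely $\tilde S$ in~(\ref{eqn:s}), once one substitutes $B_{v_1}=\Theta C^T\diag(J)$ and simplifies using $\Theta\diag(J)\Theta = -\Theta$-type identities — must equal $B_{v_2}\,\Theta_{n_{v_2}}\,B_{v_2}^T$ (again up to a scalar, which is where the factor $\tfrac i4$ and $S=\tfrac i4\tilde S$ come from). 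So the whole question reduces to: for which even $n_{v_2}$ does there exist a real $n\times n_{v_2}$ matrix $B_{v_2}$ with $B_{v_2}\,\Theta_{n_{v_2}}\,B_{v_2}^T = \tilde S$ (equivalently $\Lambda_{b1}$ with the Hermitian factorization encoded in steps 3–6)?

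First I would do the algebraic reduction carefully: substitute the forced value of $B_{v_1}$, expand all the $B T B^T$ terms, and verify that the realizability equation is equivalent to $B_{v_2}\Theta_{n_{v_2}}B_{v_2}^T = \tilde S$ (I expect the identities $J^T=-J$, $J^2=-I$, $\diag(J)\Theta=\Theta\diag(J)$ and block-diagonality to be all that is needed). For the \emph{sufficiency} direction, I would then show that $S=\tfrac i4\tilde S$ is Hermitian, take its spectral/SVD decomposition $S=U^\dagger D U$, and check that $\Lambda_{b1}=(|D|+D)^{1/2}U$ together with the stated formula for $B_{v_2}$ indeed solves the equation with $n_{v_2}=\rank\tilde S = \rank S$; this is the content of steps 3–6 of the algorithm and amounts to verifying that $2i\Theta[-\Lambda_{b1}^\dagger\ \ \Lambda_{b1}^T]P\diag(M)$ has the required $B_{v_2}\Theta_{n_{v_2}}B_{v_2}^T$. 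Along the way one must confirm $B_{v_2}$ comes out real (this follows because $\tilde S$ is real so $S$ is purely imaginary Hermitian, hence $D$ is real and $\Lambda_{b1}$ has a definite reality structure that the $\diag(M)$, $P$ bookkeeping converts into a real $B_{v_2}$) and that its column count is the even integer $\rank\tilde S$ (using that $\tilde S$ is real skew — so its rank is even — which I should check from~(\ref{eqn:s})).

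For the \emph{converse}, suppose some $B_{v_1},B_{v_2}$ make~(\ref{eqn:model2}) physically realizable. By the Corollary, $B_{v_1}$ is forced, and $B_{v_2}$ must satisfy $B_{v_2}\Theta_{n_{v_2}}B_{v_2}^T = \tilde S$. Since $\Theta_{n_{v_2}}$ is invertible, $\rank(B_{v_2}\Theta_{n_{v_2}}B_{v_2}^T)\le \rank B_{v_2}\le n_{v_2}$, so $n_{v_2}\ge\rank\tilde S = r$, which is exactly the claimed lower bound. The main obstacle I anticipate is the bookkeeping in the first step — getting the realizability equation into the clean form $B_{v_2}\Theta_{n_{v_2}}B_{v_2}^T=\tilde S$ with the correct scalar factor, and correctly tracking how the permutation $P$, the matrix $M$, and the partition into $[-\Lambda^\dagger\ \ \Lambda^T]$ interact so that the constructed $B_{v_2}$ is real with the right number of columns; everything else (the rank bound, the spectral construction) is then routine.
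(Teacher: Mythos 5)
Your proposal is correct, and it reaches the result by a genuinely different route from the paper. The paper never works from Corollary~1 directly: it goes back to the $(R,\Lambda)$ parametrization of Definition~\ref{def:pr}, splits $\Lambda$ into blocks $\Lambda_{b0},\Lambda_{b1},\Lambda_{b2}$ tied to $C$, $B_{v_2}$ and $B_u$ respectively, and shows that the only freedom is the constraint $\Lambda_{b1}^\dagger\Lambda_{b1}=\Xi_1+\tfrac{i}{4}\tilde S$ with $\Xi_1$ real symmetric; sufficiency then comes from the spectral choice $\Xi_1=U^\dagger|D|U$ (the same construction you propose), and the converse from the rank identity $\rank(\Xi_1+i\tilde S/4)=\tfrac12\rank\bigl[\begin{smallmatrix}\Xi_1 & \tilde S/4\\ -\tilde S/4 & \Xi_1\end{smallmatrix}\bigr]\ge\tfrac12\rank\tilde S$. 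Your reduction instead turns the Corollary's first condition, after substituting the forced $B_{v_1}=\Theta C^T\diag(J)$ and using $T=i\,\diag(J)$ for each vacuum channel, into the single equation $B_{v_2}\Theta_{n_{v_2}}B_{v_2}^T=-\Theta\tilde S\Theta$ (note it is a congruence by $\Theta$, not merely a scalar multiple of $\tilde S$ as you write, and $\tilde S$ is skew-symmetric, not symmetric --- neither slip affects the rank argument). This buys you a markedly cleaner converse: $n_{v_2}\ge\rank B_{v_2}\ge\rank(B_{v_2}\Theta_{n_{v_2}}B_{v_2}^T)=\rank\tilde S=r$ with no need for the complex-rank lemma the paper invokes from \cite{BER05}. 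For sufficiency your route reduces the problem to factoring a real skew-symmetric matrix of even rank $r$ through $\Theta_r$, which your spectral construction (or the canonical congruence form of a skew-symmetric matrix) delivers; the realness of the resulting $B_{v_2}$ and its column count $r$ check out exactly as you anticipate. The only bookkeeping you must genuinely carry out is the verification that $2i\Theta[-\Lambda_{b1}^\dagger\ \ \Lambda_{b1}^T]P\diag(M)$ satisfies $B_{v_2}\Theta_{n_{v_2}}B_{v_2}^T=-4\Theta\,\mathfrak{Im}(\Lambda_{b1}^\dagger\Lambda_{b1})\Theta=-\Theta\tilde S\Theta$, which is a routine computation.
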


\begin{proof}
The proof is structured as follows. We first show that $n_u + r$ 
introduced quantum 
noises are sufficient for physical realizability. We then show that 
$n_u + r$ introduced quantum noises are necessary for physical realizability.
	
Following the method of \cite{JNP1}, the construction of the matrices $R$, $\Lambda$, $B_{v_1}$ and 
$B_{v_2}$ in (\ref{eqn:a2}) - (\ref{eqn:c2}) is as follows:  
	\begin{eqnarray*}
		R &=&  -\frac{1}{4}\left( \Theta A + A^T \Theta^T \right); \\
		\Lambda &=& \begin{bmatrix}
			\frac{1}{2}C^T P^T \begin{bmatrix}
				I \\ iI \end{bmatrix} &
			\Lambda^T_{b1} & \Lambda^T_{b2} \end{bmatrix}^T; \\
		B_{v_1} &=& \Theta C^T \diag (J); \\
		B_{v_2} &=& 2i\Theta \begin{bmatrix}
			-\Lambda^{\dagger}_{b1} & \Lambda^T_{b1} \end{bmatrix}
			P \nonumber \diag (M),
	\end{eqnarray*}
	where $\Theta$ is defined as in (\ref{eqn:ctheta}). 
	Here, 
	$$	\Lambda_{b2} = -i \begin{bmatrix} I &
			0 \end{bmatrix}
			P \diag (M) \nonumber B_u^T \Theta; $$
			and $\Lambda_{b1}$ is any complex $\frac{1}{2} n_{v_2}
	\times n$ matrix such that
	\begin{eqnarray}
		\Lambda_{b1}^{\dagger}\Lambda_{b1} &=&  \Xi_1 \nonumber \\ 
		&&{} + i \bigg( 
		\frac{A^T \Theta^T  - \Theta A}{4} - \frac{1}{4}C^T P^T
		\begin{bmatrix}0 & I \\ -I & 0 \end{bmatrix} PC \nonumber \\ 
		&&{} - \mathfrak{Im}
		\left( \Lambda^{\dagger}_{b2} \Lambda_{b2} \right) 
		\bigg) \label{eqn:XX}
	\end{eqnarray}
	where $\Xi_1$ is any real symmetric $n \times n$ matrix such that
	$\Lambda_{b1}^{\dagger}\Lambda_{b1}$ is nonnegative definite.

The matrix $\Lambda_{b1}$ can be constructed as follows: first a real 
symmetric $n \times n$ matrix $\Xi_1$ is constructed such that
the right hand side of (\ref{eqn:XX}) is nonnegative definite.
Then $\Lambda_{b1}$ is constructed such that (\ref{eqn:XX}) holds. 

Note that $\Lambda_{b1}$ has $\frac{1}{2}n_{v_2}$ rows and thus
determines the number of additional quantum noises required in this 
implementation. 
\if {false}
In \cite{JNP1}, the construction required that 
$n_v \ge n_u + 2$ (i.e. $n_{v_2} \ge 2$), however in the special case that 
the imaginary terms on the right hand side of (\ref{eqn:XX}) sum to zero, 
$\Xi_1$ can be set to the zero matrix and we can allow $n_{v_2} = 0$ and hence 
the $B_{v_2}$ and $\dif v_2$ terms in (\ref{eqn:model2}) vanish.
\fi
We now provide a method for choosing $\Xi_1$ and 
$\Lambda_{b1}$ to obtain the result.

It is desired to construct $\Xi_1$ such that 
$\Lambda_{b1}^{\dagger}\Lambda_{b1}$ 
is of minimum rank. This will allow $\Lambda_{b1}$ to be constructed 
with the minimum number of rows. 
We make the following observations about the terms in equation (\ref{eqn:XX}):
\begin{eqnarray}
	-\Theta &=& \Theta^T; \label{eqn:p2} \\
	P^T \begin{bmatrix} 0 & I \\ -I & 0 \end{bmatrix} P 
	&=& \Theta_{n_y}; 
	\label{eqn:p3} \\
	\Lambda_{b2}^{\dagger} \Lambda_{b2} &=& 
	\Theta^{\dagger} B_u \Gamma^{\dagger}
	\begin{bmatrix}I \\
	0\end{bmatrix}
	\begin{bmatrix}I&
	0\end{bmatrix}
	\Gamma B_u^T \Theta \nonumber \\
	&=& 
	- \frac{1}{4}\Theta B_u 
	\diag \left( \begin{bmatrix} 1 & i \\ -i & 1 \end{bmatrix} \right)
	B_u^T \Theta; \nonumber \\
	\mathfrak{Im}\left(\Lambda^{\dagger}_{b2}\Lambda_{b2}\right) &=& 
	- \frac{1}{4}\Theta B_u 
	\diag \left( \begin{bmatrix} 0 & 1 \\ -1 & 0 \end{bmatrix} \right)
	B_u^T \Theta \nonumber \\
	&=& - \frac{1}{4}\Theta B_u \Theta_{n_u} B_u^T \Theta.
	\label{eqn:p4}
\end{eqnarray}
Substituting (\ref{eqn:p2}), (\ref{eqn:p3}) and (\ref{eqn:p4}) into
(\ref{eqn:XX}), we obtain
$$\Xi_2 =  \Xi_1 + \frac{i}{4} \tilde{S}$$
where 
$$\Xi_2 =  \Lambda_{b1}^{\dagger}\Lambda_{b1}$$ 
and $\tilde{S}$ is defined as in (\ref{eqn:s}).

Note that the matrix $\tilde{S}$
is real and skew symmetric. Thus
$ S = \frac{i}{4} \tilde{S} $
is Hermitian, has real eigenvalues and is diagonalizable: 
$S = U^{\dagger}DU$ where $D$ is diagonal and $U$ is unitary. 

We wish to find a real, symmetric matrix $\Xi_1$ such that 
$\Xi_2 = \Xi_1 + S$ is positive semi-definite and of minimum rank.
Let $\Xi_1 = U^\dagger \left| D \right| U$.
We claim that $\Xi_1$ is real and 
symmetric, that $\Xi_2 = \Xi_1 + S \ge 0$, and that $\Xi_2$ has rank equal to 
half that of $S$.

First, we show that this matrix $\Xi_1$ is real and symmetric. 
Observe that $\Xi_1 = {\Xi_1}^\dagger$ and $\Xi_1 \ge 0$.
Also: 
$${\Xi_1}^2 = U^\dagger {\left| D \right|}^2 U = U^\dagger D^2 U = S^2.$$
Here, $S$ is purely imaginary, thus $S^2$ is real and ${\Xi_1}^2 \ge 0$ is also 
real, and therefore has a real square root. From the 
uniqueness of the positive semi-definite square root of a positive 
semi-definite matrix 
\cite [Theorem 7.2.6] {HJ85} we conclude that $\Xi_1$ is real.
Further, since $\Xi_1$ is Hermitian, $\Xi_1$ is symmetric.

We now show that $\Xi_2$ has rank equal to 
half that of $S$ and
that $\Xi_2$ is positive semi-definite. We observe that 
$S$ is Hermitian and so its eigenvalues are real. Thus the 
eigenvalues of $\tilde{S}$ are purely imaginary. Also since $\tilde{S}$ 
is real, its eigenvalues 
occur in complex conjugate pairs. Thus $D$ is of the form:
$$D = \begin{bmatrix} 
	\lambda_1 & 0 & 0 & 0 & \cdots \\
	0 & -\lambda_1 & 0 & 0 & \cdots \\
	0 & 0 & \lambda_2 & 0 & \cdots \\
	0 & 0 & 0 & -\lambda_2 & \cdots \\
	\vdots & & & & \ddots
\end{bmatrix}; \lambda_i \ge 0.$$

$$\left| D \right| = \begin{bmatrix}
	\lambda_1 & 0 & 0 & 0 & \cdots \\
	0 & \lambda_1 & 0 & 0 & \cdots \\
	0 & 0 & \lambda_2 & 0 & \cdots \\
	0 & 0 & 0 & \lambda_2 & \cdots \\
	\vdots & & & & \ddots
\end{bmatrix}; \lambda_i \ge 0.$$

$$\left| D \right| + D = \begin{bmatrix}
	2\lambda_1  & 0 & 0 & 0 & \cdots \\
	0 & 0 & 0 & 0 & \cdots \\
	0 & 0 & 2\lambda_2 & 0 & \cdots \\
	0 & 0 & 0 & 0 & \cdots \\
	\vdots & & & & \ddots
\end{bmatrix}; \lambda_i \ge 0.$$

From this, it can be seen that $\left| D \right| + D$ has a rank which is half that of 
$D$. Since
\begin{eqnarray*}
	\Xi_2 &=& \Xi_1 + S \\
		&=& U^\dagger \left| D \right| U + U^\dagger D U \\
		&=& U^\dagger (\left| D \right| + D) U,
\end{eqnarray*}
it follows that $\Xi_2$ is positive semi-definite and has a rank which is half that of $S$. 

Since $S$ and $\tilde{S}$ 
have the same rank, $\Xi_2$ has rank $\frac{r}{2}$ where $r$ the 
rank of $\tilde{S}$. Since $\Xi_2 \ge 0$ has rank $\frac{r}{2}$, it is 
possible to construct $\Lambda_{b1}$ with $\frac{r}{2}$ rows, 
such that $\Xi_2 = \Lambda_{b1}^\dagger \Lambda_{b1}$. Recall that, 
$\Lambda_{b1}$ has $\frac{1}{2}n_{v_2}$ rows, and we 
have $n_{v_2} = r$. That is, 
the system is
physically 
realizable with the number of additional quantum noises $n_v$ 
equal to $n_u + r$ where $r$ 
is the rank of the matrix $\tilde{S}$ defined in (\ref{eqn:s}).

We now consider the second part of the theorem and show that $n_u + r$
introduced noises are necessary for physical realizability. 
To do so, it is sufficient to show that the 
number of columns of 
$\begin{bmatrix}B_{v_1} & B_{v_2} \end{bmatrix}$ must be greater than or 
	equal to $n_u + r$.

From (\ref{eqn:b2}), it can be shown that:
\begin{subequations}
\begin{eqnarray}
	B_u &=&  2i \Theta \begin{bmatrix} -\Lambda^\dagger_{b2} &
		\Lambda^T_{b2} \end{bmatrix} \Gamma; \\
	B_{{v_1}} &=&  2i \Theta \begin{bmatrix} -\Lambda^\dagger_{b0} &
		\Lambda^T_{b0} \end{bmatrix} \Gamma; \\
	B_{{v_2}} &=&  2i \Theta \begin{bmatrix} -\Lambda^\dagger_{b1} &
		\Lambda^T_{b1} \end{bmatrix} \Gamma; \label{eqn:b11}
\end{eqnarray}
\end{subequations}
where $$\Lambda = \begin{bmatrix} \Lambda_{b0} \\ 
	\Lambda_{b1} \\ \Lambda_{b2} \end{bmatrix}.$$ 
That is, $B_{v_2}$ has twice the number of columns as $\Lambda_{b1}$ has 
rows. Therefore, we wish to show 
that $\Lambda_{b1}$ has at least $\frac{r}{2}$ 
rows.

Consider,
$$
\mathfrak{Im}(\Lambda^\dagger \Lambda) =
\mathfrak{Im}(\Lambda^\dagger_{b0} \Lambda_{b0} ) +
\mathfrak{Im}(\Lambda^\dagger_{b1} \Lambda_{b1} ) +
\mathfrak{Im}(\Lambda^\dagger_{b2} \Lambda_{b2} ).$$
That is, 
\begin{equation}
\mathfrak{Im}(\Lambda^\dagger_{b1} \Lambda_{b1} ) =
\mathfrak{Im}(\Lambda^\dagger \Lambda) -
\mathfrak{Im}(\Lambda^\dagger_{b0} \Lambda_{b0} ) -
\mathfrak{Im}(\Lambda^\dagger_{b2} \Lambda_{b2} ).\label{eqn:p5}
\end{equation}
Rearranging (\ref{eqn:a}), we obtain 
$$\frac{1}{2}\Theta^{-1}A = R + \mathfrak{Im}(\Lambda^\dagger \Lambda),$$
where $R$ and $\mathfrak{Im}(\Lambda^\dagger \Lambda),$ are respectively the symmetric and 
skew-symmetric parts of the left hand side of this equation. 
From this, it can be shown that 
\begin{equation}
\mathfrak{Im}(\Lambda^\dagger \Lambda) = -\frac{1}{4}(\Theta A + A^T 
\Theta). \label{eqn:p6}
\end{equation}
Also using (\ref{eqn:c}) and (\ref{eqn:p3}), it is straightforward to verify that
\begin{equation}
\mathfrak{Im}(\Lambda^\dagger_{b0} \Lambda_{b0}) = 
\frac{1}{4}C^T \Theta_{n_y} C. \label{eqn:p7}
\end{equation}
Substituting (\ref{eqn:p4}), (\ref{eqn:p6}) and (\ref{eqn:p7}) into (\ref{eqn:p5}) we obtain 
$$	\mathfrak{Im}(\Lambda^\dagger_{b1} \Lambda_{b1} ) 
= \frac{1}{4}\tilde{S}
$$
where $\tilde{S}$ is defined as in (\ref{eqn:s}).
That is, 
$$\Lambda^\dagger_{b1} \Lambda_{b1} = \Xi_1 + \frac{i}{4}\tilde{S},$$
where $\Xi_1$ is the real part of $\Lambda^\dagger_{b1} \Lambda_{b1}$. 

Now using \cite [Fact 2.17.3] {BER05}, we observe that 
\begin{eqnarray*}
	\mbox{rank} \left( \Lambda^\dagger_{b1} \Lambda_{b1} \right)
	&=& \mbox{rank} \left( \Xi_1 + i \frac {\tilde{S}}{4} \right) \\
	&=& \frac{1}{2} \mbox{rank} 
\begin{bmatrix} \Xi_1 & \frac {\tilde{S}}{4}  \\ 
	- \frac {\tilde{S}}{4} & \Xi_1 \end{bmatrix} \\
&\ge& \frac{1}{2} \mbox{rank} 
\begin{bmatrix} \Xi_1 & \frac {\tilde{S}}{4} \end{bmatrix} \\
&\ge& \frac{1}{2} \mbox{rank} \left[ \frac {\tilde{S}}{4} \right].
\end{eqnarray*}
That is, for any $\Xi_1$, 
$$\mbox{rank} \left(
\Lambda^\dagger_{b1} \Lambda_{b1} \right) \ge \frac{1}{2}\mbox{rank}[S].$$ 
This in turn implies that $\Lambda_{b1}$ has at least $\frac{r}{2}$ rows, 
where, $r$ is the rank of the matrix 
$\tilde{S}$ defined as in (\ref{eqn:s}). 
However, it follows from (\ref{eqn:b11}), that $B_{v_2}$ has twice as many 
columns as $\Lambda_{b1}$ has rows. 
That is, $B_{v_2}$ has at least $r$ columns and hence 
$\begin{bmatrix} B_{v_1} & B_{v_2} \end{bmatrix}$ has at least 
$n_u + r$ columns. Hence, the number of quantum noises $n_v$ is greater 
than or equal to $n_u + r$. This concludes the proof of the theorem.
\end{proof}

\subsection{Special Case - Physically Realizing a Transfer Function}

When designing LTI controllers, usually the transfer function of the controller
rather than its particular state space realization 
determines the closed loop performance. As such, the 
question of whether a particular transfer function is physically realizable 
may be of greater interest than whether a particular state space 
realization is physically realizable.

Therefore, we now turn our attention to the case in which we are interested in implementing an 
LTI quantum system with a specified strictly proper transfer function. This 
is equivalent to allowing a state transformation on the state space 
description of the system.
In particular, we consider the problem of whether a particular transfer 
function 
can be physically realized by only introducing the direct feedthrough 
quantum noises $\dif v_1$ in~(\ref{eqn:model2}). That is, 
without introducing any additional quantum noises $\dif v_2$. 

Here, we recall from Remark \ref{rem:minimal} that
for systems described by strictly proper 
transfer functions 
the direct feedthrough quantum noises are necessary for physical realizability.

Under some assumptions, it is possible 
to implement a specified transfer function as a physically realizable 
quantum system (\ref{eqn:model2}) where 
only the direct feedthrough quantum noises are introduced. 

%

\begin{theorem}
	\label{thm:tf}
	Consider a system with strictly proper transfer function matrix: 
	$$G(s) = \tilde{C}(sI - \tilde{A})^{-1}\tilde{B}_u.$$ 
	Suppose the algebraic Riccati equation (ARE) 
	\begin{equation}
		X \tilde{B}_u \Theta_{n_u} \tilde{B}_u^T X 
		- \tilde{A}^T X - X \tilde{A}	
		- \tilde{C}^T \Theta_{n_y} \tilde{C} = 0 
		\label{eqn:ric1}
	\end{equation}
	has a non-singular, real, skew-symmetric solution $X$. 
	Here, the matrices $\Theta_{n_u}$ and $\Theta_{n_y}$ are 
	defined as in (\ref{eqn:ctheta}). 
	Then there exists matrices $\left\{ A, B_u, C \right\}$ 
	such that 
	$$G(s) = C(sI - A)^{-1}B_u$$
	and the corresponding system (\ref{eqn:model2}) is physically realizable 
	with only the direct feedthrough quantum noises $\dif v_1$ and no 
	additional quantum noises $\dif v_2$. 
\end{theorem}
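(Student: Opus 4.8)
The plan is to reduce the statement to Theorem~\ref{thm:main} by using the one degree of freedom we have not yet exploited: since only the transfer function $G(s)$ matters, we are free to replace the given realization $\{\tilde A,\tilde B_u,\tilde C\}$ by any similar one $\{A,B_u,C\}=\{T\tilde A T^{-1},\,T\tilde B_u,\,\tilde C T^{-1}\}$ with $T$ real and non-singular, since $C(sI-A)^{-1}B_u=\tilde C(sI-\tilde A)^{-1}\tilde B_u$. By Theorem~\ref{thm:main}, the system~(\ref{eqn:model2}) built from $\{A,B_u,C\}$ is physically realizable with $n_{v_2}=r=\rank\tilde S$, where $\tilde S=\Theta B_u\Theta_{n_u}B_u^T\Theta-\Theta A-A^T\Theta-C^T\Theta_{n_y}C$; so it realizes $G(s)$ with \emph{only} the direct feedthrough noises $\dif v_1$ (recall $n_{v_1}=n_u$ always) exactly when $\tilde S=0$. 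Hence it suffices to exhibit a single admissible $T$ for which $\tilde S$ vanishes.

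Next I would substitute the similar realization into $\tilde S$ and congruence it. Multiplying on the left by $T^T$ and on the right by $T$, cancelling the $T^{-1}T$ pairs and using $\Theta^T=-\Theta$, one obtains $T^T\tilde S T = M\tilde B_u\Theta_{n_u}\tilde B_u^T M - M\tilde A - \tilde A^T M - \tilde C^T\Theta_{n_y}\tilde C$, where $M\defeq T^T\Theta T$ is again real and skew-symmetric. Since $T$ is non-singular, $\tilde S=0$ if and only if this right-hand side is zero, i.e.\ if and only if $M$ is a real, skew-symmetric solution of the ARE~(\ref{eqn:ric1}). Thus the whole problem collapses to: given the hypothesised non-singular, real, skew-symmetric solution $X$ of~(\ref{eqn:ric1}), find a real non-singular $T$ with $T^T\Theta T=X$.

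That last step is exactly a congruence question, settled by a standard normal-form fact: every non-singular real skew-symmetric matrix is congruent to the canonical symplectic form $\diag(J)$, which is precisely $\Theta$. Since $X$ is real, non-singular, skew-symmetric, and of the same (necessarily even) dimension as $\Theta$, it is congruent to $\Theta$; concretely, take a non-singular real $P$ with $P^T X P=\Theta$ and set $T=P^{-1}$, so that $T^T\Theta T = P^{-T}\Theta P^{-1} = X$. With this $T$ we have $M=X$, hence $\tilde S=0$, hence $r=0$; feeding $\{A,B_u,C\}=\{T\tilde A T^{-1},T\tilde B_u,\tilde C T^{-1}\}$ into the construction in the proof of Theorem~\ref{thm:main} then gives $B_{v_1}=\Theta C^T\diag(J)$ with the matrix $B_{v_2}$ (and the noise $\dif v_2$) absent, which is the asserted physically realizable realization of $G(s)$.

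The only delicate part of this argument is the transpose/sign bookkeeping that produces the congruence identity $T^T\tilde S T = M\tilde B_u\Theta_{n_u}\tilde B_u^T M - M\tilde A - \tilde A^T M - \tilde C^T\Theta_{n_y}\tilde C$; everything after that is standard. The genuinely load-bearing observation — and the reason the ARE hypothesis is the natural one — is that "$\tilde S=0$ for some realization of $G(s)$'' is precisely the statement that $\Theta$ is congruent to \emph{some} non-singular real skew-symmetric solution of~(\ref{eqn:ric1}), and such congruence is always achievable by a real $T$ once such a solution exists. I do not anticipate an obstacle beyond noting that non-singularity of $X$ forces the state dimension to be even, which is consistent with, and indeed required by, the quantum setting.
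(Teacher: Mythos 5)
Your proposal is correct and follows essentially the same route as the paper: both choose a real non-singular $T$ with $X=T^T\Theta T$ (using the standard fact that a non-singular real skew-symmetric matrix is congruent to $\Theta$), pass to the similar realization $\{T\tilde A T^{-1},\,T\tilde B_u,\,\tilde C T^{-1}\}$, verify via the same congruence computation that the ARE hypothesis forces $\tilde S=0$, and then invoke Theorem~\ref{thm:main} to conclude $n_{v_2}=0$. The only cosmetic difference is that you present the algebra as the identity $T^T\tilde S T=M\tilde B_u\Theta_{n_u}\tilde B_u^TM-M\tilde A-\tilde A^TM-\tilde C^T\Theta_{n_y}\tilde C$ with $M=T^T\Theta T$, whereas the paper expands the ARE directly; these are the same calculation.
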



\begin{proof}
	First, note that for any $2m \times 2m$ non-singular real, 
	skew-symmetric matrix $X$ there exists a non-singular, real matrix $T$ 
	for which $X = T^T \Theta T$ where $\Theta$ is defined as in (\ref{eqn:ctheta}) 
	\cite[Corollary 8.24]{BAK02}. 
	Let, 
	\begin{eqnarray*}
		X &=&  T^T \Theta T; \qquad T \in \mathbb{R}^n; \quad \mbox{det } T \neq 0; \\
		A &=& T \tilde{A} T^{-1};\\
		B_u &=& T \tilde{B_u};\\
		C &=& \tilde{C} T^{-1};\\
		B_{v_1} &=& \Theta C^T \diag (J).
	\end{eqnarray*}

	The result now follows by applying Theorem \ref{thm:main}. 
	Indeed, (\ref{eqn:ric1}) implies: 
	\begin{eqnarray*}
		0 
		&=&  X \tilde{B}_u \Theta_{n_u} {\tilde{B}_u}^T X 
		- \tilde{A}^T X - X \tilde{A} - \tilde{C}^T \Theta_{n_y} \tilde{C} \\
		&=& T^T \Theta T\tilde{B}_u\Theta_{n_u} {\tilde{B}_u}^T T^T\Theta T - \tilde{A}^T T^T\Theta T 
		-T^T \Theta T \tilde{A} \\ 
		&& - \tilde{C}^T \Theta_{n_y} \tilde{C} \\
		&=& \Theta T\tilde{B}_u\Theta_{n_u} {\tilde{B}_u}^T T^T\Theta  
		- \left( T^T \right)^{-1} \tilde{A}^T T^T\Theta  
		- \Theta T \tilde{A} T^{-1} \\ 
		&& - \left( T^T \right) ^{-1} \tilde{C}^T \Theta_{n_y} \tilde{C} T^{-1}\\
		&=& \Theta T\tilde{B}_u\Theta_{n_u} \left( T\tilde{B}_u \right) ^T \Theta  
		- \left( T\tilde{A}T^{-1} \right)^T \Theta  
		- \Theta T \tilde{A} T^{-1} \\ 
		&& - \left( \tilde{C}T^{-1} \right) ^T \Theta_{n_y} \tilde{C} T^{-1}\\
		&=& \Theta B_u \Theta_{n_u} {B_u}^T \Theta  
		- A^T \Theta  
		- \Theta A  
		- C^T \Theta_{n_y} C.\\
	\end{eqnarray*}
	That is, the matrix $\tilde{S}$ defined in (\ref{eqn:s}) has rank zero. Applying
	Theorem \ref{thm:main}, we conclude that the system $\left\{ A,
	B_u, C \right\}$ can be physically realized with $n_{v_2} = 0$. 
\end{proof}

We now give conditions for when the ARE~
(\ref{eqn:ric1}) has a non-singular, real, skew symmetric solution $X$. 
The proof given below 
closely follows that in
\cite[Theorem 13.5]{ZDG96}.
This result also leads to a numerical procedure 
for physically realizing a strictly proper transfer function with the minimal number 
of additional quantum noises. 

For convenience, we define $\tilde{R} = -\tilde{B} \Theta_{n_u} \tilde{B}^T$, 
$\tilde{Q} = \tilde{C}^T \Theta_{n_y} \tilde{C}$ and 
rewrite (\ref{eqn:ric1}): 
\begin{equation}
	\label{eqn:ric2}
	\tilde{A}^T X + X \tilde{A} + X \tilde{R} X + \tilde{Q} = 0.
\end{equation}
Note that $\tilde{Q}$ and $\tilde{R}$ are skew symmetric.

Define 
\begin{equation}
	H = \begin{bmatrix}\tilde{A} & \tilde{R} \\ -\tilde{Q} & -\tilde{A}^T \end{bmatrix},
\label{eqn:h}
\end{equation} and $ 
	Z = -i \left[ \begin{smallmatrix} 0 & I \\ I & 0 \end{smallmatrix} 
		\right] $.
Note that  $Z^{-1} = Z^\dagger$, $(ZH)$ is skew symmetric, and 
$Z^{-1} H Z = Z^\dagger H Z = - H^\dagger.$
That is, $H$ and $-H^\dagger$ are similar, from which it follows that 
$\lambda$ is an eigenvalue of $H$ if and only if $-\lambda$ is. That is, the eigenvalues of 
$H$ are symmetric about the imaginary axis.

Assume $H$ has no eigenvalues on the imaginary axis and let $\chi_-(H)$ be the 
$n$-dimensional spectral subspace \cite{ZDG96} of H corresponding to its negative 
eigenvalues. We find a set of basis vectors for $\chi_-(H)$ and stack the basis vectors 
to form a matrix. Partitioning this matrix, we can write 
$\chi_-(H) = \mathrm{Im} \left[ \begin{smallmatrix}X_1 \\X_2\end{smallmatrix}
	\right]$ where $X_1,X_2 \in \mathbb{C}^{n\times n}$. 
	Here $\mathrm{Im} \left[ \begin{smallmatrix}X_1 \\X_2\end{smallmatrix}	\right]$ 
	denotes the subspace spanned by the columns of the matrix  
	$\left[ \begin{smallmatrix}X_1 \\X_2\end{smallmatrix}\right]$.

We assume $X_1$ is non-singular or equivalently that $\chi_-(H)$ and 
$\mathrm{Im} \left[ \begin{smallmatrix} 0 \\ I \end{smallmatrix} \right]$ 
are complementary subspaces. Then define $X = X_2 X_1^{-1}$. 
It follows that $X$ is uniquely determined by $H$. We will  
denote the corresponding function by $X = \mathrm{\emph{Ric}}(H)$ with the 
domain \emph{dom(Ric)} consisting of matrices $H$ of the form~(\ref{eqn:h}) satisfying the 
properties that $H$ has no purely imaginary eigenvalues, and that $X_1$ is 
non-singular.

\begin{theorem}
	\label{thm:tf2}
	Suppose $H \in \mathrm{\emph{dom(Ric)}}$ and $X = \mathrm{\emph{Ric}}
	(H)$. Then $X$ is skew-symmetric and solves the algebraic Riccati 
	equation 
	$\tilde{A}^T X + X \tilde{A} + X \tilde{R} X + \tilde{Q} = 0.$
\end{theorem}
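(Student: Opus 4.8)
The plan is to adapt the classical Hamiltonian-matrix argument for algebraic Riccati equations (as in \cite[Theorem 13.5]{ZDG96}) to our skew-symmetric setting. First I would exploit the defining property of $\chi_-(H)$: since it is an $H$-invariant subspace, there exists a matrix $H_-\in\mathbb{C}^{n\times n}$ (whose eigenvalues are exactly the negative-real-part eigenvalues of $H$) such that
\begin{equation*}
	H\begin{bmatrix}X_1\\X_2\end{bmatrix} = \begin{bmatrix}X_1\\X_2\end{bmatrix}H_-.
\end{equation*}
Writing this out with the block form \eqref{eqn:h} gives $\tilde{A}X_1+\tilde{R}X_2 = X_1 H_-$ and $-\tilde{Q}X_1-\tilde{A}^T X_2 = X_2 H_-$. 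Multiplying both equations on the right by $X_1^{-1}$, substituting $X = X_2X_1^{-1}$, and then eliminating $H_-$ between the two resulting identities yields, after routine manipulation, exactly $\tilde{A}^T X + X\tilde{A} + X\tilde{R}X + \tilde{Q}=0$. This is the standard part of the argument and carries over verbatim from \cite{ZDG96}.

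The genuinely new step is showing $X$ is skew-symmetric (rather than symmetric, as in the classical LQR case). Here I would use the structure of $H$ relative to $Z = -i\left[\begin{smallmatrix}0&I\\I&0\end{smallmatrix}\right]$ noted just before the theorem statement, namely that $ZH$ is skew-symmetric and $Z^{-1}HZ = -H^\dagger$. The key observation is that $X_1^\dagger X_2$ (not $X_1^* X_2$ in the real-symmetric sense, but the relevant Hermitian-type pairing) should be controlled by the quantity $\begin{bmatrix}X_1^\dagger & X_2^\dagger\end{bmatrix}(ZH)\begin{bmatrix}X_1\\X_2\end{bmatrix}$, or more directly by considering $\begin{bmatrix}X_1\\X_2\end{bmatrix}^T Z \begin{bmatrix}X_1\\X_2\end{bmatrix}$. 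The plan is: form $W \defeq \begin{bmatrix}X_1\\X_2\end{bmatrix}^T Z \begin{bmatrix}X_1\\X_2\end{bmatrix} = -i(X_1^T X_2 + X_2^T X_1)$, and show $W = 0$. Using the invariance relation above, compute $W H_- + H_-^T W = \begin{bmatrix}X_1\\X_2\end{bmatrix}^T(H^T Z + Z H)\begin{bmatrix}X_1\\X_2\end{bmatrix}$; since $ZH$ is skew-symmetric, $ZH = -(ZH)^T = -H^T Z^T = H^T Z$ (using $Z^T = -Z$), so $H^T Z + ZH = 2ZH$... I need to be careful with signs here, but the upshot should be that $H^T Z + Z H = 0$, giving the Sylvester equation $W H_- + H_-^T W = 0$. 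Since $H_-$ and $-H_-^T$ have disjoint spectra (both lie strictly in the open left half-plane, so $-H_-^T$ lies strictly in the open right half-plane), this Sylvester equation has only the trivial solution $W = 0$. Then $X_1^T X_2 + X_2^T X_1 = 0$, and multiplying by $(X_1^T)^{-1}$ on the left and $X_1^{-1}$ on the right gives $X_2 X_1^{-1} + (X_1^T)^{-1}X_2^T = X + X^T = 0$, i.e.\ $X$ is skew-symmetric.

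I anticipate the main obstacle to be bookkeeping with the factor $Z$ and its sign conventions: verifying precisely that the bilinear form built from $Z$ rather than the classical $\left[\begin{smallmatrix}0&-I\\I&0\end{smallmatrix}\right]$ is the right one to make $H^T Z + ZH$ vanish (or produce the correct homogeneous Sylvester equation), and confirming that the eigenvalues of $H$ really do avoid the imaginary axis by hypothesis so that the spectral separation argument for $W=0$ applies cleanly. A secondary subtlety is that $X_1, X_2$ are a priori complex, so one must check that the skew-symmetry relation $X + X^T = 0$ together with the fact that $X$ solves a real algebraic Riccati equation (and is uniquely determined by $H$) forces $X$ to be real; this should follow because the complex conjugate $\bar X$ would also be a solution arising from $\chi_-(H)$ for the same real $H$, hence $\bar X = X$ by uniqueness. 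With those pieces in place the proof closes.
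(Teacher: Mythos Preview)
Your proposal is correct and follows essentially the same route as the paper: both use the invariance relation $H\left[\begin{smallmatrix}X_1\\X_2\end{smallmatrix}\right]=\left[\begin{smallmatrix}X_1\\X_2\end{smallmatrix}\right]H_-$, sandwich with $\left[\begin{smallmatrix}X_1\\X_2\end{smallmatrix}\right]^T Z$, use skew-symmetry of $ZH$ to obtain a homogeneous Lyapunov/Sylvester equation in $X_1^T X_2 + X_2^T X_1$, and invoke the Hurwitz property of $H_-$ to force this quantity to vanish; the ARE part is the standard elimination of $H_-$. One bookkeeping correction: $Z$ is symmetric ($Z^T=Z$), not skew, so $(ZH)^T=-ZH$ gives $H^T Z=-ZH$ directly, yielding $H^T Z+ZH=0$ as you anticipated.
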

\begin{proof}
	Let $X_1$, $X_2$ be as above. There exists a Hurwitz matrix $H_- \in 
	\mathbb{C}^{n\times n}$ such that 
	\begin{equation}
		\label{eqn:HH-}
	H \begin{bmatrix}X_1 \\X_2\end{bmatrix} = 
		\begin{bmatrix}X_1 \\X_2\end{bmatrix} H_-.
	\end{equation}
	Pre-multiply (\ref{eqn:HH-}) by 
$\left[ \begin{smallmatrix}X_1 \\X_2\end{smallmatrix} \right]^T Z$
to obtain
$$\left[ \begin{smallmatrix}X_1 \\X_2\end{smallmatrix} \right]^T Z
	H \left[ \begin{smallmatrix}X_1 \\X_2\end{smallmatrix} \right] = 
	\left[ \begin{smallmatrix}X_1 \\X_2\end{smallmatrix} \right]^T 
	Z\left[ \begin{smallmatrix}X_1 \\X_2\end{smallmatrix} \right] H_-.$$
Since $ZH$ is skew-symmetric, so are both sides of the above equation. From 
the right-hand side:
$$(X_2^T X_1 + X_1^T X_2) H_- = - H_-^\dagger (X_2^T X_1 + X_1^T X_2)^T.$$ 
This is a Lyapanov equation. Since $H_-$ is Hurwitz, the unique solution is 
$X_2^T X_1 + X_1^T X_2 = 0.$
That is, $X_1^T X_2$ is skew symmetric, and since $X_1$ is non-singular, 
$X = (X_1^{-1})^T (X_1^T X_2) X_1^{-1}$ is also skew-symmetric.

It remains to be shown that $X$ is a solution to (\ref{eqn:ric2}).
Post-multiplying (\ref{eqn:HH-}) by $X_1^{-1}$, we obtain 
	$$H \left[ \begin{smallmatrix}I \\X\end{smallmatrix} \right] = 
		\left[ \begin{smallmatrix}I \\X \end{smallmatrix} \right] 
		X_1 H_- X_1^{-1}$$
and pre-multiplying by $\left[ \begin{smallmatrix} X & -I \end{smallmatrix} 
	\right] $ gives 
	$$\left[ \begin{smallmatrix} X & -I \end{smallmatrix} \right] H 
	\left[ \begin{smallmatrix}I \\X \end{smallmatrix} \right] = 0,$$ 
which is precisely (\ref{eqn:ric2}).
\end{proof}
\begin{remark}
	The above proof also leads to a numerical procedure for solving the 
	ARE (\ref{eqn:ric1}) and hence solving the physical realizability 
	problem under consideration. This numerical procedure involves 
	solving the eigenvalue, eigenvector problem for the matrix H. 
	The following corollary, which follows directly from combining 
	Theorems \ref{thm:tf} and \ref{thm:tf2}, is the main result 
	of this subsection.
\end{remark}
\begin{corollary}
	Consider a system with strictly proper transfer function matrix: 
	$$G(s) = \tilde{C}(sI - \tilde{A})^{-1}\tilde{B}_u.$$
	Suppose $H \in \mathrm{\emph{dom(Ric)}}$ and
	$X = \mathrm{\emph{Ric}}(H)$ is non-singular where $H$ is defined as in 
	(\ref{eqn:h}). 
	Then there exists matrices $\left\{ A, B_u, C \right\}$ 
	such that 
	$$G(s) = C(sI - A)^{-1}B_u$$ 
	and the corresponding system (\ref{eqn:model2}) is physically realizable 
	with only the direct feedthrough quantum noises $\dif v_1$ and no 
	additional quantum noises $\dif v_2$. 
\end{corollary}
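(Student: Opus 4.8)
The plan is to obtain the statement by simply chaining the two immediately preceding theorems, after checking that their hypotheses and conclusions line up. First I would invoke Theorem~\ref{thm:tf2}: since $H\in\mathrm{dom(Ric)}$ by assumption, the matrix $X=\mathrm{Ric}(H)$ is well defined, is skew-symmetric, and solves $\tilde{A}^T X + X\tilde{A} + X\tilde{R}X + \tilde{Q}=0$. Unwinding the abbreviations $\tilde{R}=-\tilde{B}_u\Theta_{n_u}\tilde{B}_u^T$ and $\tilde{Q}=\tilde{C}^T\Theta_{n_y}\tilde{C}$, this equation is exactly the ARE~(\ref{eqn:ric1}). By the hypothesis of the corollary this particular solution $X$ is moreover non-singular.

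The one point that needs a little care is the \emph{realness} of $X$, since Theorem~\ref{thm:tf2} as stated asserts only skew-symmetry whereas Theorem~\ref{thm:tf} requires a real solution. Here I would observe that $H$ in~(\ref{eqn:h}) is a real matrix (all of $\tilde{A}$, $\tilde{R}$, $\tilde{Q}$ are real), so its spectrum is invariant under complex conjugation; consequently the stable spectral subspace $\chi_-(H)$ is closed under conjugation and therefore admits a real basis. Choosing such a basis makes $X_1$ and $X_2$ real, hence $X=X_2X_1^{-1}$ is real. Equivalently, one can appeal to the standard fact that $\mathrm{Ric}(\cdot)$ of a real Hamiltonian-type matrix is real, as in~\cite[Theorem~13.5]{ZDG96}, on which Theorem~\ref{thm:tf2} is patterned.

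Having established that $X$ is a non-singular, real, skew-symmetric solution of~(\ref{eqn:ric1}), I would then apply Theorem~\ref{thm:tf} verbatim: it produces matrices $\left\{A,B_u,C\right\}$ with $G(s)=C(sI-A)^{-1}B_u$ such that the associated system~(\ref{eqn:model2}) is physically realizable with only the direct feedthrough quantum noises $\dif v_1$ and no additional quantum noises $\dif v_2$. This is precisely the assertion of the corollary, so the proof is complete.

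The main obstacle here is essentially bookkeeping rather than mathematics: confirming that the ARE in Theorem~\ref{thm:tf2} coincides term-for-term with~(\ref{eqn:ric1}) once the definitions of $\tilde{Q}$ and $\tilde{R}$ are substituted, and nailing down the realness of $X$ as above so that Theorem~\ref{thm:tf} genuinely applies. No new estimates, constructions, or Riccati-theoretic arguments beyond those already in Theorems~\ref{thm:tf} and~\ref{thm:tf2} are required.
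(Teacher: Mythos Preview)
Your proposal is correct and follows exactly the approach the paper intends: the corollary is stated there as following ``directly from combining Theorems~\ref{thm:tf} and~\ref{thm:tf2},'' with no separate proof given. Your explicit verification that $X$ is real (needed for Theorem~\ref{thm:tf} but not asserted in Theorem~\ref{thm:tf2}) is in fact more careful than the paper itself, which leaves this point implicit.
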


We now give the algorithm for solving the ARE (\ref{eqn:ric1}) and hence 
physically realizing a given transfer function by only introducing 
direct feedthrough quantum noises.
Suppose we wish to physically realize the transfer function 
$$G(s) = \tilde{C}(sI - \tilde{A})^{-1}\tilde{B}_u.$$
\begin{enumerate}
\item
	Construct the matrix 
	$H$ as in (\ref{eqn:h}). 
	Find the eigenvalues and eigenvectors of $H$.
	Check that $H$ has no purely imaginary eigenvalues. In practice, this means 
	checking that the real part of each eigenvalue has magnitude greater than some 
	small numerical tolerance.
\item
	Construct a matrix $\left[ \begin{smallmatrix} X_1 \\ X_2 \end{smallmatrix} \right] $ such that its 
	columns are the eigenvectors of $H$ that correspond to eigenvalues with negative 
	real part.
	Check that $X_1$ and $X_2$ are non-singular and calculate $X = X_2 X_1^{-1}$. 
	The matrix $X$ is a non-singular solution to the ARE (\ref{eqn:ric1}). 
\item
	Find the eigenvalues and eigenvectors of $X$. These will 
	occur in complex conjugate pairs. Hence, construct a 
	diagonal matrix $\Lambda$ 
	with entries being the eigenvalues of $X$ and with complex conjugate 
	eigenvalues in adjacent columns. Construct a matrix $V$ with 
	columns being the corresponding eigenvectors of $X$ normalized to have unit norm 
	and with complex conjugate eigenvectors in adjacent columns. 
\item
	Construct the $n \times n$ diagonal matrix $\tilde{\Lambda}$ with alternating 
	diagonal entries $i$ and $-i$. Also 
	construct the $n \times n$ block diagonal matrix $\tilde{V}$ with 
	each diagonal block corresponding to $\frac{1}{\sqrt{2}}
	\left[ \begin{smallmatrix}1 & 1 \\ i & -i
	\end{smallmatrix} \right]$.
\item
	Calculate $T = \tilde{V} D V^{\dagger}$ where
	$D = \left( \tilde{\Lambda}^{-1} \Lambda \right)^{\frac{1}{2}}$.
\item
	Construct
	\begin{eqnarray*}
		A &=& T \tilde{A} T^{-1};\\
		B_u &=& T \tilde{B_u};\\
		C &=& \tilde{C} T^{-1};\\
		B_{v_1} &=& \Theta C^T \diag (J).
	\end{eqnarray*}
\end{enumerate}
Then $$G(s) = C(sI - A)^{-1}B_u$$ 
and the system (\ref{eqn:model2}) corresponding to 
$\left\{ \tilde{A},\tilde{B_u},\tilde{C} \right\}$ 
is physically realizable by 
introducing only direct feedthrough quantum noises
with $B_{v_1}$ constructed as above.
No additional quantum noises are necessary for physical realizability. 

\begin{remark}
	We now justify the above numerical algorithm for constructing 
	$T$ such that $X = T^T \Theta T$. 
	
	Since $X$ is skew-symmetric, $X = V \Lambda V^\dagger,$ 
	where $V$ is a unitary matrix with columns which are the eigenvectors of $X$. 
	Also, $\Lambda$ is a diagonal matrix where the diagonal elements are the 
	eigenvalues of $X$, which are purely imaginary and occur in complex 
	conjugate pairs. For every eigenvector $v$ of $X$ corresponding to 
	eigenvalue $\lambda$, its complex conjugate $\bar{v}$ is also an eigenvector 
	and corresponds to $\bar{\lambda}$. If necessary, we reorder the columns 
	of $V$ and corresponding entries of $\Lambda$ such that these complex 
	conjugate pairs are adjacent: $V = \left[ \begin{smallmatrix}
		v_1 & \bar{v}_1 & v_2 & \bar{v}_2 & \ldots 
	\end{smallmatrix} \right].$
	
	Similarly we can write $\Theta = 
	\tilde{V}\tilde{\Lambda} \tilde{V}^\dagger$ where $\tilde{V}$ is a block 
	diagonal matrix with repeated blocks $\frac{1}{\sqrt{2}}
	\left[ \begin{smallmatrix}1 & 1 \\ i & -i
	\end{smallmatrix} \right]$ and $\tilde{\Lambda}$ is a 
	diagonal matrix with alternating entries $i$ and $-i$.
	Observe that there exists a diagonal matrix 
	$$D = \begin{bmatrix} 
		d_1 & 0 & 0 & 0 & \cdots \\
		0 & d_1 & 0 & 0 & \cdots \\
		0 & 0 & d_2 & 0 & \cdots \\
		0 & 0 & 0 & d_2 & \cdots \\
		\vdots & & & & \ddots
	\end{bmatrix}$$
	such that $\Lambda = 
	D \tilde{\Lambda} D$ and the diagonal elements $d_i$ are real and positive. 
	We now have $X = VD\tilde{V}^\dagger \Theta
	\tilde{V}DV^\dagger.$ Define $T = \tilde{V}DV^\dagger$, then 
	$X = T^\dagger \Theta T$.
	
	Observe that the matrices 
	$\left[ \begin{smallmatrix}1 & 1 \\ i & -i
	\end{smallmatrix} \right]$ and
	$\left[ \begin{smallmatrix}d_i & 0 \\ 0 & d_i
	\end{smallmatrix} \right]$ commute. Therefore $\tilde{V}$ and $D$ 
	commute. We claim that the matrix $T =D\tilde{V}V^\dagger$ is real. 
	This follows since 
	\begin{align*}
		(\tilde{V}V^\dagger)^\dagger 
	&= 
		V \tilde{V}^\dagger \linebreak \\
	&= 
		\left[ \begin{smallmatrix}
			v_1 & \bar{v}_1 & v_2 & \bar{v}_2 & \ldots 
		\end{smallmatrix} \right] \mbox{diag}  
		\left( \begin{smallmatrix}\frac{1}{\sqrt{2}}\end{smallmatrix} 
			\left[ \begin{smallmatrix} 1 & -i \\ 1 & i
			\end{smallmatrix} \right] \right) \\
	&=	
		\begin{smallmatrix}\frac{1}{\sqrt{2}}\end{smallmatrix} 
		\left[ \begin{smallmatrix}
			(v_1 + \bar{v}_1) \: &
			-i(v_1 - \bar{v}_1) \: &
			(v_2 + \bar{v}_2) \: &
			-i(v_2 - \bar{v}_2) \:
			\ldots &
		\end{smallmatrix} \right]
	\end{align*}
	which is real. Therefore $T$ as constructed above is real and 
	$X = T^T\Theta T.$
\end{remark}

\section{A suboptimal coherent Quantum LQG controller design algorithm} \label{sec:lqg}
In this section, we 
use the results in Section \ref{sec:result} to provide an algorithm for 
the design of a suboptimal coherent quantum LQG controller.
The main idea of our algorithm is to design a classical LQG 
controller and then use Theorem 2 to implement this controller as a 
quantum system. In contrast with 
the classical LQG controller synthesis problem, here 
the separation principle of using the optimal state 
estimator and optimal regulator no longer applies due to the relation between 
the optimal regulator gain and the additional quantum noises that arise when 
implementing the controller as a quantum system. This is addressed in the 
algorithm that follows.

\subsection{Problem Formulation} \label{subsec:formulation}
We will now formally state the problem to be addressed. Our formulation 
follows that in \cite{NJP1} with some minor differences. 

Suppose we have a quantum plant described by the following QSDEs which are a special case 
of~(\ref{eqn:model}):
\begin{eqnarray}
	\dif x(t) &=& A x(t) \dif t + B_u \dif u(t) + B_{w_1} \dif w_1(t) 
		; \nonumber \\
	\dif y(t) &=& C x(t) \dif t + D_u \dif u(t) + 
	D_{w_1} \dif w_1(t).
	\label{eqn:plant}
\end{eqnarray}

Here, as in (\ref{eqn:model}), $x$ 
is a column vector of $n$ self-adjoint system variables.
The column vector $\dif u(t)$ represents the input to the system. 
It consists of $n_u$ signals of the form 
$\dif u(t) = \beta_u(t) \dif t + \dif \tilde{u}(t)$ where $\dif \tilde{u}(t)$
is the noise part of $\dif u(t)$ with Ito products
$\dif \tilde{u}(t) \dif \tilde{u}^T(t) = F_{\tilde{u}} \dif t$
where $F_{\tilde{u}}$ is non-negative Hermitian. 
Also, the self-adjoint, adapted process $\beta_u(t)$ is the signal part of $\dif u(t)$.  
Furthermore, $\dif w_1$ is a column vector of $n_{w_1}$ non-commutative quantum Wiener processes with Ito 
products $\dif w_1(t) \dif {w_1}^T(t) = F_{w_1} \dif t$
where $F_{w_1}$ is non-negative Hermitian. Here, $\dif w_1$ represents noises driving 
the system and may for example include vacuum noises and/or thermal noises.
The column vector of $n_y$ signals $\dif y(t)$ represents the output of the 
system. 
Finally, as in~(\ref{eqn:model}), 
$n$, $n_u$, $n_{w_1}$ and $n_y$ are all assumed to 
be even and
$A$, $B_u$, $B_{w_1}$, $C$, $D_u$ and $D_{w_1}$ 
are appropriately 
dimensioned real matrices describing the dynamics of the system. For further
details see \cite{JNP1,NJP1}.
For simplicity we restrict our attention to the case where $n_y = n_u$.

Furthermore, suppose that we wish to minimize an infinite 
	horizon quadratic cost function:
	\begin{equation}
		\label{eqn:cost}
		J = \limsup_{t_f \to \infty} \frac{1}{t_f} 
		\int^{t_f}_0 \left< 
		x^T\!(t) R_1 x(t) 
		+ \beta_u(t)^T R_2 \beta_u(t) \right> \dif t   
	\end{equation}
	where $\left<.\right>$ denotes the quantum expectation; e.g., see \cite{NJP1}.

We restrict attention to controllers described by the following 
QSDEs which are of the form (\ref{eqn:model2}):

\begin{eqnarray}
	\dif x_K(t) &=& A_K x_K(t) \dif t + B_y \dif y(t) \nonumber \\
	&& {} + B_{v_1} \dif v_1(t) +  B_{v_2} \dif v_2(t)  
	; \nonumber \\
	\dif u(t) &=& C_K x_K(t) \dif t + \dif v_1(t).
	\label{eqn:controller}
\end{eqnarray}

The problem is to design a physically realizable quantum 
controller of the form (\ref{eqn:controller}) that minimizes the cost function
 (\ref{eqn:cost}).

To obtain an explicit expression for $J$, we consider the closed loop system:
	$$ \dif \eta = \mathcal{A} \eta \dif t + \mathcal{B} w_{CL};$$
where,
	$$
	\eta = 
	\begin{bmatrix}
		x \\
		x_K
	\end{bmatrix}; \qquad
	w_{CL} =  
	\begin{bmatrix}
		\dif w_1 \\ \dif v_1 \\ \dif v_2
	\end{bmatrix};$$
	$$\dif w_{CL} \dif {w_{CL}}^T =  F_{w_{CL}} \dif t; \qquad
	S_{w_{CL}} = \mathfrak{Re} ( F_{w_{CL}} );
	$$
	\begin{eqnarray*}
	\mathcal{A} &=&   
	\begin{bmatrix}
		A & B_u C_K \\ B_y C & A_K + B_y D_u C_K
	\end{bmatrix}; \\
	\mathcal{B} &=&  
	\begin{bmatrix}
		B_{w_1} & B_u & 0 \\ 
		B_y D_{w_1} & B_y D_u + B_{v_1} &  B_{v_2}
	\end{bmatrix}.
	\end{eqnarray*}

Finally, 
\begin{equation}
J = \mbox{Tr} \left( \bar{R} \bar{Q} \right),
	\label{eqn:J}
\end{equation}
where 
$\bar{Q}$ is the unique symmetric positive definite solution of the 
Lyapunov equation
\begin{equation*}
	\mathcal{A} \bar{Q} + \bar{Q} \mathcal{A}^T 
	+ \mathcal{B} S_{w_{CL}} \mathcal{B}^T = 0;
\end{equation*}
	and
$$\bar{R} = \begin{bmatrix} R_1 & 0 \\ 0 & C_K^T R_2 C_K \end{bmatrix}. $$
For a more detailed derivation of these expressions, see~\cite{NJP1}.

In the following subsection, we present an algorithm for designing a quantum controller 
of the form (\ref{eqn:controller}) which is a suboptimal solution to this problem.

\subsection{Design Algorithm} \label{subsec:algorithm}
We start by forming an \emph{Auxiliary Classical LQG Problem}. Consider the plant 
and controller equations (\ref{eqn:plant}) and (\ref{eqn:controller}) and define 
$\dif \hat{u} = \dif u - \dif v_1$. By temporarily ignoring 
the $B_{v_1}\dif v_1$ and $B_{v_2}\dif v_2$ noise terms, and treating $\dif w_1$ and $\dif v_1$ as 
classical Wiener processes with intensity matrices $S_{w_1}$ and $S_{v_1}$
respectively, 
we obtain the auxiliary classical plant equations: 
\begin{eqnarray}
	\dif x &=& A x \dif t + B_u \dif \hat{u} 
	+ B_{w_1} \dif w_1 + B_u \dif v_1; \nonumber \\
	\dif y &=& C x \dif t + D_u \dif \hat{u}  
	+ D_{w_1} \dif w_1 + D_u \dif v_1;
	\label{eqn:auxplant}
\end{eqnarray}
and the auxiliary classical controller equations:
\begin{eqnarray}
	\dif x_K &=& A_K x_K \dif t +
	B_{y} \dif y; \nonumber \\
	\dif \hat{u} &=& C_K x_K \dif t. 
	\label{eqn:auxcontroller}
\end{eqnarray}

We also define an \emph{Auxiliary Cost Function} which introduces an extra term 
to account for the fact that we have ignored the noise terms 
$B_{v_1} \dif v_1$ and $B_{v_2} \dif v_2$ that will appear in the quantum version of the controller: 
\begin{align}
	J_{AUX} &= 
	\limsup_{t_f \to \infty} \frac{1}{t_f} 
	\mathbb{E} \! \left[
	\int^{t_f}_0 \!\!\! 
		x^T\!(t) R_1 x(t) 
		+ \beta_u(t)^T R_2 \beta_u(t) \dif t 
	\right]
	\nonumber \\
	&\qquad + \limsup_{t_f \to \infty} \frac{1}{t_f} 
	\mathbb{E} \! \left[
	\int^{t_f}_0 \!\!
	\rho \beta_u(t)^T R_2 \beta_u(t) 
	\dif t \right] \label{eqn:Jaux}
\end{align}
where $\mathbb{E}\!\left[.\right]$ denotes the classical expectation and $\rho \ge 0$. 
Equivalently,
\begin{equation}
	J_{AUX} = \limsup_{t_f \to \infty} \frac{1}{t_f} 
	\mathbb{E} \! \left[ \int^{t_f}_0 \!\!\!  
	x^T\!(t) R_1 x(t)	+ \beta_u(t)^T \tilde{R_2} \beta_u(t) 
	\! \dif t \right] 
	\label{eqn:Jaux2}
\end{equation}
where $\tilde{R_2} = (1 + \rho) R_2$, $\rho \ge 0$. 
The \emph{Auxiliary LQG problem} is to find a 
	controller (\ref{eqn:auxcontroller}) that minimizes the 
	cost function (\ref{eqn:Jaux2}) for the 
	plant (\ref{eqn:auxplant}).

Our approach to the coherent quantum LQG problem is as follows. 
The auxiliary LQG problem is first solved for a given 
$\rho \ge 0$ and the resulting auxiliary controller (\ref{eqn:auxcontroller}) 
is implemented as a quantum controller (\ref{eqn:controller}) 
by applying Theorem \ref{thm:tf} or Theorem \ref{thm:main}. The 
cost function (\ref{eqn:cost}) is then evaluated using the expression (\ref{eqn:J}). 
Finally, this process is repeated, optimizing the cost function (\ref{eqn:cost}) 
by using a line search over the parameter $\rho$ to obtain our final suboptimal 
controller.

We now detail one iteration of this design process. 
The auxiliary LQG problem is a standard classical LQG problem and is 
solved in the usual manner; see for example~\cite{KS72}. The solution is the 
auxiliary controller (\ref{eqn:auxcontroller}) with
\begin{eqnarray*}
	A_K &=& A - KC - B_uF + KD_uF; \\
	B_{y} &=& K; \\
	C_K &=& -F.
\end{eqnarray*}
Here, $F$ and $K$ are obtained as follows:
$$F = \tilde{R}_2^{-1} B_u^T P;$$
where $P \ge 0$ is the solution to the ARE 
$$R_1 - PB_u \tilde{R}_2^{-1}B_u^TP + A^TP + PA = 0,$$
and 
$$K = (Q C^T + V_{12} ) V_2^{-1};$$
where $Q \ge 0$ is the solution to the ARE 
\begin{eqnarray*}
0 &=&  (A - V_{12} V_2^{-1}C) Q 
 + Q (A - V_{12} V_2^{-1}C)^T \\
&& {} - Q C^T V_2^{-1} C Q  
+ V_1 - V_{12} V_2^{-1}V_{12}^T.
\end{eqnarray*}
Here,
\begin{align*}
& \mathbb{E} 
\begin{bmatrix} B_{w_1} & B_u \\ D_{w_1} & D_u \end{bmatrix} 
\begin{bmatrix} \dif w_1 \\ \dif v_1 \end{bmatrix}
\begin{bmatrix} \dif w_1 \\ \dif v_1 \end{bmatrix}^T
\begin{bmatrix} B_{w_1} & B_u \\ D_{w_1} & D_u \end{bmatrix}^T\\
& \qquad =
\begin{bmatrix} V_1 & V_{12} \\ V_{12}^T & V_2 \end{bmatrix} \dif t;\\
& \begin{bmatrix} V_1 & V_{12} \\ V_{12}^T & V_2 \end{bmatrix} \\
& \qquad =
\begin{bmatrix} B_{w_1} & B_u \\ D_{w_1} & D_u \end{bmatrix}
\begin{bmatrix} S_{w_1} & 0 \\ 0 & S_{v_1} \end{bmatrix}
\begin{bmatrix} B_{w_1} & B_u \\ D_{w_1} & D_u \end{bmatrix}^T.
\end{align*}

Next, we obtain a fully quantum system of the form (\ref{eqn:controller}), based on 
the auxiliary controller (\ref{eqn:auxcontroller}) with 
$\left\{ A_K, B_{y}, C_K \right\}$ obtained above. We first attempt to apply 
Theorem~\ref{thm:tf}. If the conditions of the theorem are satisfied, the transfer 
function of the auxiliary controller is implemented as a system 
(\ref{eqn:controller}), with only direct feedthrough quantum noises 
introduced by applying Theorem 
\ref{thm:tf}. That is, $n_v = n_u$.  If the conditions of Theorem~\ref{thm:tf} 
are not satisfied, then the auxiliary controller is implemented 
by applying Theorem~\ref{thm:main}, which will result in $n_v > n_u$ quantum noises.

Finally, the cost function (\ref{eqn:cost}) is evaluated using the expression (\ref{eqn:J}). 
For details on obtaining $B_{v_1}$ and $B_{v_2}$ see Section~\ref{subsec:gc}. 

Our algorithm is summarized as follows:

\begin{enumerate}
	\item For a given $\rho \ge 0$, form the Auxiliary Classical LQG Problem 
		(\ref{eqn:auxplant}), (\ref{eqn:Jaux}). 
	\item Solve to obtain the classical auxiliary controller (\ref{eqn:auxcontroller}).
	\item Implement this controller as a coherent quantum controller (\ref{eqn:controller}).
	\item Form the corresponding closed loop system, and 
		evaluate the resulting cost function (\ref{eqn:cost}).
	\item Repeat, optimizing over $\rho \ge 0$.
\end{enumerate}

We now give a heuristic motivation for our algorithm.
In the standard (classical) LQG problem, the separation principle allows the 
optimal state estimator and optimal regulator to be designed independently and 
then combined to yield the optimal controller. In 
contrast to this, in the quantum version of the problem, the 
regulator gain $C_K$ directly affects how strongly the quantum noises $\dif v_1$ and  
$\dif v_2$ impact the state estimator because $B_{v_1}$ and $B_{v_2}$
depend on $C_K$. 

Our method ignores the introduction of the additional noises 
$\dif v_1$ and $\dif v_2$ when designing our state estimator. In order to ensure 
that the effect of these noises 
is not too great, when designing the regulator 
we introduce the parameter $\rho$ which puts an additional penalty 
on the size of the control signal. 
The final step of optimizing over $\rho$ ensures the right balance: if $\rho$ 
is too small the effect of the additional noises $\dif v_1$ and $\dif v_2$ dominate 
the closed loop system response leading to poor performance 
whereas if $\rho$ is too large, the 
feedback gain is unduly penalized also leading to poor performance.

We justify our approach by observing that in practice, our algorithm is 
computationally tractable and examples show that the controllers so obtained 
yield good results. In particular, the 
example which follows demonstrates how a suboptimal coherent quantum controller can 
outperform a combination of heterodyne measurement and optimal (classical) measurement 
based feedback control.

\section{Illustrative Example}\label{sec:ex}
To demonstrate the coherent quantum controller 
design process of Section \ref{sec:lqg}, we consider 
a two mirror optical cavity driven by thermal noise of intensity 
$k_n$ as shown in Figure \ref{fig:ex}. This example is a modification of an example considered in \cite{HM13}. 
Optically coupled to the second mirror is a controller to be designed via our 
algorithm. We compare the value of the \emph{cost function} obtained with our controller 
to the no control case; i.e. when the second mirror is not connected to any other 
system and thus driven by a vacuum noise. We also compare our approach with a  
scheme involving heterodyne detection and an optimal classical 
LQG controller. Our design objective is to minimize the expected number of photons in 
the cavity: $N = \left< a^{\dagger}a \right> = \frac{1}{4} 
\left< {x_1}^2 + {x_2}^2 \right> - 0.5$, where 
$x_1$ and $x_2$ are the position and momentum operators for the cavity.
It will be shown 
that for all $k_n > 0$,
it is possible to achieve better performance using a 
quantum controller designed using our method than with the optimal 
measurement based controller. This validates the utility of our method.

\begin{figure}[h]
	\includegraphics[trim = 0mm 0mm 0mm -10mm, scale=0.4]{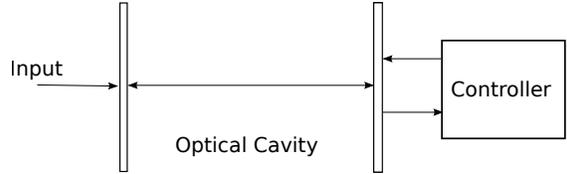}
	\caption{\label{fig:ex} Quantum Plant: we wish to minimize the expected number of photons 
in the cavity.}
\end{figure}
Our plant is of the form (\ref{eqn:plant}) with 
\begin{eqnarray}
	A &=&  {} - \frac{\gamma}{2} I_{2 \times 2}; \nonumber \\
	B_u &=&  - \sqrt{\kappa_2} I_{2 \times 2}; \nonumber \\
	B_{w_1} &=& \sqrt{\kappa_1} I_{2 \times 2}; \nonumber \\
	C &=& \sqrt{\kappa_2} I_{2 \times 2}; \nonumber \\
	D_u &=& I_{2 \times 2}; \nonumber \\
	D_{w_1} &=& 0_{2 \times 2}; \nonumber \\
	S_{w_1} &=& (1 + 2 k_n) I_{2 \times 2}\nonumber \\    
	\gamma &=&  0.2; \nonumber \\
	\kappa_1 &=&  0.1; \nonumber \\
	\kappa_2 &=&  0.1.
	\label{eqn:explant}
\end{eqnarray}

It is sufficient to minimize the cost function (\ref{eqn:cost}) with 
$R_1 = I_{2 \times 2}$ and $R_2 = 0_{2 \times 2}$. Then $N = \frac{1}{4}J - 0.5$.

\begin{remark}
We use $R_2 = 10^{-6} I_{2 \times 2}$ to design 
both the heterodyne classical LQG controller and the auxiliary LQG controller. However, 
the cost function (\ref{eqn:cost}) for the resulting controller is then 
evaluated using $R_2 = 0_{2 \times 2}$.
\end{remark}

\subsection{No Control}
We first consider the case of no control as a reference. Here, the 
system input is driven by vacuum noise: $\dif u = \dif v_1$.
The resulting closed loop system is:
$$\dif x = A x \dif t + \begin{bmatrix} B_{w_1} & B_u \end{bmatrix}
\begin{bmatrix} \dif w_1 \\ \dif v_1 \end{bmatrix},$$
where 
$
\dif w_{CL} = \left[ \begin{smallmatrix}
		\dif w_1 \\ \dif v_1
	\end{smallmatrix} \right]
	$ is a quantum Wiener process with covariance
	$$S_{w_{CL}} = \begin{bmatrix}
	(1 + 2k_n) I_{2 \times 2} & 0 \\
	0 & I_{2 \times 2} 
\end{bmatrix}.$$
Equation (\ref{eqn:J}) can then be used to find 
the value of the cost function (\ref{eqn:cost}) for a range of values for $k_n$.

\subsection{Heterodyne measurement and classical LQG control}
Next we consider combining heterodyne measurement with a classical optimal LQG 
controller. 
Heterodyne measurement introduces an additional vacuum noise 
input. Similarly, the output of the classical controller will 
contain a vacuum noise component when applied to the input of the plant. This is accounted for 
by making the following substitutions: 
\begin{eqnarray*}
	\dif u &=&  \dif \tilde{u} + \dif w_3; \\
	\dif \tilde{y} &=& \dif y + \dif w_2;
\end{eqnarray*}
into (\ref{eqn:plant}) to obtain an \emph{augmented plant}. 
Here $\dif \tilde{u}$ and $\dif \tilde{y}$ are classical signals which represent 
the input and output of the augmented plant. 
The resulting equations for the augmented plant are as follows:
\begin{eqnarray*}
	\dif x &=&  A x \dif t + B_u \dif \tilde{u}  
	+ B_{\tilde{w}} \dif \tilde{w} ; \nonumber \\
	\dif \tilde{y} &=&  C x \dif t + D_u \dif \tilde{u} 
	+ D_{\tilde{w}} \dif \tilde{w} .
\end{eqnarray*}
Here $A$, $B_u$, $C$, and $D_u$ are as before, and
\begin{eqnarray*}
	B_{\tilde{w}} &=& \begin{bmatrix}
		{} - \sqrt{\kappa_1} I_{2 \times 2} &
		0_{2 \times 2} &
		{} - \sqrt{\kappa_2} I_{2 \times 2}
	\end{bmatrix}; \\
	D_{\tilde{w}} &=& \begin{bmatrix}
		0_{2 \times 2} &
		I_{2 \times 2} &
		I_{2 \times 2}
	\end{bmatrix}; \\
	\dif \tilde{w} &=& \begin{bmatrix}
		\dif w_1 \\ \dif w_2 \\ \dif w_3
	\end{bmatrix}; \\
	S_{\tilde{w}} &=&  
	\begin{bmatrix}
	(1 + 2k_n) I_{2 \times 2} & 0 & 0 \\
	0 & I_{2 \times 2} & 0 \\ 
	0 & 0 & I_{2 \times 2}  
\end{bmatrix}.
\end{eqnarray*}

As with the auxiliary LQG problem in our design algorithm, we treat 
$\dif \tilde{w}$ as a standard classical Wiener process with intensity matrix 
$S_{\tilde{w}}$. We now have a standard classical LQG problem. We wish to find a 
controller of the form:
\begin{eqnarray*}
	\dif x_K &=& A_K x_K \dif t +
	B_{y} \dif \tilde{y}; \nonumber \\
	\dif \tilde{u} &=& C_K x_K \dif t. 
\end{eqnarray*}

The estimator gain $K$ and the regulator gain $F$ are obtained in the usual 
manner. For this example $R_2 = 0_{2 \times 2}$ but for computational reasons we 
assume $R_2$ takes a small value of $R_2 = 10^{-6}$ 
and hence $\left\{ A_K, B_y, C_K \right\}$ are obtained. The closed loop 
system is then as follows:
\begin{eqnarray}
	\begin{bmatrix}
		\dif x \\
		\dif x_K
	\end{bmatrix}
	&=& 
	\mathcal{A}
	\begin{bmatrix}
		x \\
		x_K
	\end{bmatrix}
	\dif t +
	\mathcal{B}
	\dif \tilde{w};	\nonumber \\
	\mathcal{A} &=&  
	\begin{bmatrix}
		A & B_u C_K \\ B_y C & A_K + B_y D_u C_K
	\end{bmatrix}; \nonumber \\
	\mathcal{B} &=& 
	\begin{bmatrix}
		B_{\tilde{w}} \\ B_y D_{\tilde{w}}
	\end{bmatrix}. \nonumber
\end{eqnarray}

The value of the cost function (\ref{eqn:cost}) can now be calculated using (\ref{eqn:J}) 
with $R_2 = 0_{2 \times 2}$ and
substituting $S_{\tilde{w}}$ for $S_{w_{CL}}$. 

\subsection{Quantum LQG control}
Finally we consider our proposed control scheme. 
First the auxiliary LQG problem is formed. The auxiliary plant is
given by (\ref{eqn:auxplant}) with parameters as in (\ref{eqn:explant}). 
The cost function is given by (\ref{eqn:Jaux}) with 
$R_1 = I_{2 \times 2}$ and $R_2 = 10^{-6}$.

Then, optimizing $J$ over $\rho$, we do the following: 
\begin{enumerate}
	\item Solve the auxiliary LQG problem as detailed above  
		to obtain $\left\{ A_K, B_y, C_K \right\}$. 
	\item Obtain a physically realizable quantum implementation of 
		$\left\{ A_K, B_y, C_K \right\}$. 
		We do this by first attempting to apply Theorem~3 to obtain $B_{K1}$. 
		If the conditions of 
		this theorem are not met we apply Theorem 2 to obtain $B_{K1}$ and $B_{K2}$.
	\item Evaluate the cost function (\ref{eqn:cost}) using (\ref{eqn:J}) 
		with $R_1 =  I_{2 \times 2}$ and $R_2 = 0_{2 \times 2}$.
\end{enumerate}

\subsection{Comparison of controller performance}
The relative performance of the no control case, the classical LQG case and our coherent control case 
are illustrated in Figure \ref{fig:results1} and Figure \ref{fig:results2}. 
In the regime where both the thermal
noise driving the system and the quantum noises are significant, 
the coherent quantum feedback controller offers the best performance of the 
schemes considered. 
If we leave the quantum regime with $k_n \gg 0$, the
relative performance benefits of the coherent quantum feedback controller 
over the measurement based feedback controller diminish
as the thermal noise dominates the system and the quantum noises become
insignificant by comparison.
In the limit as $k_n \to 0$, where the system is driven only by vacuum noise, 
our proposed controller offers no advantage over the no control case. 
This is consistent with the idea that the cavity cannot be driven below the
vacuum state.

\begin{figure}[h]
	\centering
	\includegraphics[trim = 0mm 0mm 0mm 0mm, scale=0.5]{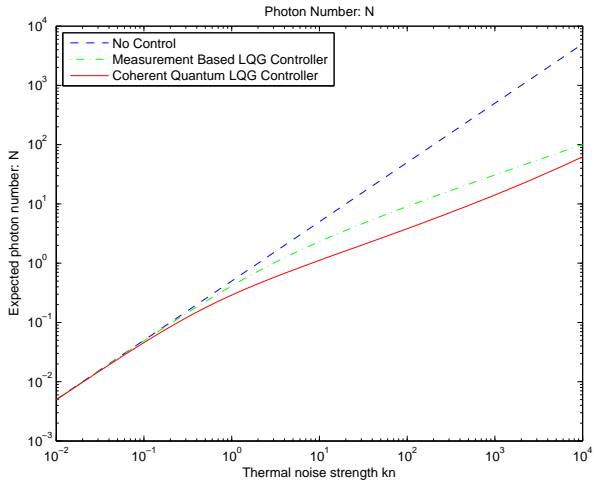}
	\caption{\label{fig:results1} Expected photon number $N$ for no control,  
	heterodyne measurement and classical LQG control, and 
	coherent quantum LQG control.}
\end{figure}

\begin{figure}[h]
	\centering
\includegraphics[trim = 0mm 0mm 0mm 0mm, scale=0.5]{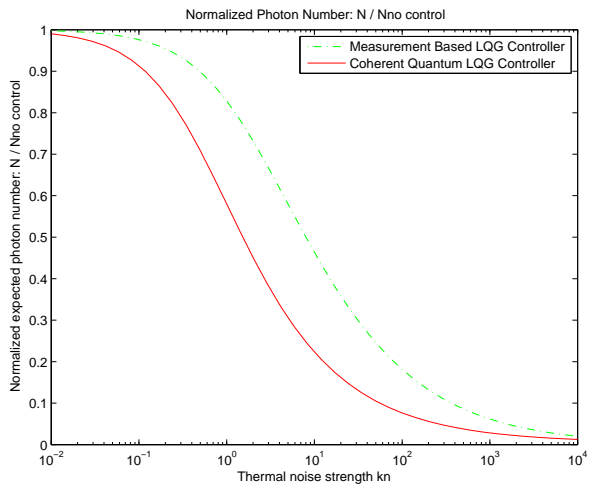}
\caption{\label{fig:results2} Normalized photon number $N / N_{NC}$ for 
	heterodyne measurement and classical LQG control, and 
	coherent quantum LQG control.}
\end{figure}

\section{Conclusion} \label{sec:conc}
The notion of physical realizability is fundamental to the coherent quantum feedback control problem 
where we wish to implement a given synthesized controller as a quantum system. 
By introducing additional quantum noises, it is always 
possible to make a given LTI system physically realizable. However, 
introducing additional quantum noises is undesirable in terms of the control system performance. 
In this paper, we have 
given an expression for the number of introduced quantum noises that are 
necessary to make a given LTI system physically realizable. 
Our result also gives 
a method for constructing the resulting fully quantum system.

We also considered the case where a strictly proper transfer function 
is to be physically realized. 
We have given a result 
in terms of a Riccati equation for when it is possible to 
physically realize a specified transfer function by only introducing 
direct feedthrough vacuum noises and no additional quantum noises. 
We have also given conditions for when this Riccati equation has a suitable solution. 

Using these results we have developed
an algorithm for obtaining a suboptimal solution to a coherent quantum LQG 
control problem. 
Our example demonstrates the utility of our results and shows that
coherent quantum feedback control can offer performance benefits over
measurement based feedback control.

%
%

\bibliography{../irpnew}  

\begin{thebibliography}{10}
\providecommand{\url}[1]{#1}
\csname url@samestyle\endcsname
\providecommand{\newblock}{\relax}
\providecommand{\bibinfo}[2]{#2}
\providecommand{\BIBentrySTDinterwordspacing}{\spaceskip=0pt\relax}
\providecommand{\BIBentryALTinterwordstretchfactor}{4}
\providecommand{\BIBentryALTinterwordspacing}{\spaceskip=\fontdimen2\font plus
\BIBentryALTinterwordstretchfactor\fontdimen3\font minus
  \fontdimen4\font\relax}
\providecommand{\BIBforeignlanguage}[2]{{%
\expandafter\ifx\csname l@#1\endcsname\relax
\typeout{** WARNING: IEEEtran.bst: No hyphenation pattern has been}%
\typeout{** loaded for the language `#1'. Using the pattern for}%
\typeout{** the default language instead.}%
\else
\language=\csname l@#1\endcsname
\fi
#2}}
\providecommand{\BIBdecl}{\relax}
\BIBdecl

\bibitem{GZ00}
C.~Gardiner and P.~Zoller, \emph{Quantum Noise}.\hskip 1em plus 0.5em minus
  0.4em\relax Berlin: Springer, 2000.

\bibitem{AFP09}
G.~Auletta, M.~Fortunato, and G.~Parisi, \emph{{Quantum Mechanics}}, ser.
  Quantum Mechanics.\hskip 1em plus 0.5em minus 0.4em\relax Cambridge
  University Press, 2009.

\bibitem{JNP1}
M.~R. James, H.~I. Nurdin, and I.~R. Petersen, ``${H}^\infty$ control of linear
  quantum stochastic systems,'' \emph{IEEE Transactions on Automatic Control},
  vol.~53, no.~8, pp. 1787--1803, 2008.

\bibitem{ShP12}
A.~J. Shaiju and I.~R. Petersen, ``A frequency domain condition for the
  physical realizability of linear quantum systems,'' \emph{IEEE Transactions
  on Automatic Control}, vol.~57, no.~8, pp. 2033--2044, 2012.

\bibitem{NJP1}
H.~I. Nurdin, M.~R. James, and I.~R. Petersen, ``Coherent quantum {LQG}
  control,'' \emph{Automatica}, vol.~45, no.~8, pp. 1837--1846, 2009.

\bibitem{WM10}
H.~M. Wiseman and G.~J. Milburn, \emph{Quantum Measurement and Control}.\hskip
  1em plus 0.5em minus 0.4em\relax Cambridge University Press, 2010.

\bibitem{VMS12}
R.~Vijay, C.~Macklin, D.~H. Slichter, S.~J. Weber, K.~W. Murch, R.~Naik, A.~N.
  Korotkov, and I.~Siddiqi, ``Stabilizing {R}abi oscillations in a
  superconducting qubit using quantum feedback,'' \emph{Nature}, vol. 490, pp.
  77--80, 2012.

\bibitem{MaP4}
A.~I. Maalouf and I.~R. Petersen, ``Coherent ${H}^{\infty}$ control for a class
  of linear complex quantum systems,'' \emph{IEEE Transactions on Automatic
  Control}, vol.~56, no.~2, pp. 309--319, 2011.

\bibitem{MAB08}
H.~Mabuchi, ``Coherent-feedback quantum control with a dynamic compensator,''
  \emph{Physical Review A}, vol.~78, p. 032323, 2008.

\bibitem{MaP3}
A.~I. Maalouf and I.~R. Petersen, ``Bounded real properties for a class of
  linear complex quantum systems,'' \emph{IEEE Transactions on Automatic
  Control}, vol.~56, no.~4, pp. 786 -- 801, 2011.

\bibitem{VuP12b}
S.~L. Vuglar and I.~R. Petersen, ``Singular perturbation approximations for
  general linear quantum systems,'' in \emph{Proceedings of the Australian
  Control Conference}, Sydney, Australia, Nov 2012, pp. 459--463,
  {arXiv:1208.6155 [quant-ph]}.

\bibitem{NJD09}
H.~I. Nurdin, M.~R. James, and A.~C. Doherty, ``Network synthesis of linear
  dynamical quantum stochastic systems,'' \emph{SIAM Journal on Control and
  Optimization}, vol.~48, no.~4, pp. 2686--2718, 2009.

\bibitem{NUR10}
H.~Nurdin, ``Synthesis of linear quantum stochastic systems via quantum
  feedback networks,'' \emph{IEEE Transactions on Automatic Control}, vol.~55,
  no.~4, pp. 1008 --1013, April 2010.

\bibitem{NUR10A}
------, ``On synthesis of linear quantum stochastic systems by pure
  cascading,'' \emph{IEEE Transactions on Automatic Control}, vol.~55, no.~10,
  pp. 2439 --2444, October 2010.

\bibitem{Pet11}
I.~R. Petersen, ``Cascade cavity realization for a class of complex transfer
  functions arising in coherent quantum feedback control,'' \emph{Automatica},
  vol.~47, no.~8, pp. 1757 -- 1763, 2011.

\bibitem{VuP11a}
S.~L. Vuglar and I.~R. Petersen, ``How many quantum noises need to be added to
  make an {LTI} system physically realizable?'' in \emph{Proceedings of the
  Australian Control Conference}, Melbourne, Australia, November 2011.

\bibitem{VuP12a}
------, ``A numerical condition for the physical realizability of a quantum
  linear system,'' in \emph{Proceedings of the 20th International Symposium on
  Mathematical Theory of Networks and Systems}, Melbourne, Australia, 2012.

\bibitem{VuP12c}
------, ``Quantum implemention of an {LTI} {S}ystem with the minimal number of
  additional quantum noise inputs.'' in \emph{Proceedings of the 12th biannual
  European Control Conference}, Zurich, Switzerland, 2013, {arXiv:1304.6815
  [quant-ph]}.

\bibitem{HJ85}
R.~A. Horn and C.~R. Johnson, \emph{Matrix Analysis}.\hskip 1em plus 0.5em
  minus 0.4em\relax Cambridge, UK: Cambridge University Press, 1985.

\bibitem{BER05}
D.~S. Bernstein, \emph{Matrix Mathematics: Theory, Facts, And Formulas with
  Application to Linear Systems Theory}.\hskip 1em plus 0.5em minus 0.4em\relax
  Princeton, New Jersey: Princeton University Press, 2005.

\bibitem{BAK02}
A.~Baker, \emph{Matrix Groups: An Introduction to Lie Group Theory}.\hskip 1em
  plus 0.5em minus 0.4em\relax New York: Springer-Verlag, 2002.

\bibitem{ZDG96}
K.~Zhou, J.~Doyle, and K.~Glover, \emph{Robust and Optimal Control}.\hskip 1em
  plus 0.5em minus 0.4em\relax Upper Saddle River, NJ: Prentice-Hall, 1996.

\bibitem{KS72}
H.~Kwakernaak and R.~Sivan, \emph{Linear Optimal Control Systems}.\hskip 1em
  plus 0.5em minus 0.4em\relax Wiley, 1972.

\bibitem{HM13}
R.~Hamerly and H.~Mabuchi, ``Coherent controllers for optical-feedback cooling
  of quantum oscillators,'' \emph{Phys. Rev. A}, vol.~87, no.~1, p. 013815,
  2013.

\end{thebibliography}
\bibliographystyle{IEEEtran}

\end{document}